\keywords{strongly connected components, symbolic algorithm, edge-coloured digraphs, saturation, systems biology}
\newcolumntype{L}[1]{>{\raggedright\let\newline\\\arraybackslash\hspace{0pt}}m{#1}}
\newcolumntype{C}[1]{>{\centering\let\newline\\\arraybackslash\hspace{0pt}}m{#1}}
\newcolumntype{R}[1]{>{\raggedleft\let\newline\\\arraybackslash\hspace{0pt}}m{#1}}
\newtheorem{problem2}[thm]{Problem}
\newcommand{\creach}[1]{\mathrel{\xrightarrow#1{}\negthickspace^*}}
\newcommand{\var}[1]{\ensuremath{\mathtt{#1}}}
\newcommand{\vars}{\ensuremath{\mathit{Var}}}
\newcommand{\rscc}{\ensuremath{R_\mathit{scc}}}
\newcommand{\crscc}{\ensuremath{\mathfrak R_\mathit{scc}}}
\newcommand{\cG}{\ensuremath{\mathfrak G}}
\newcommand{\con}{\ensuremath{\mathit{Con}}}
\newcommand{\noncon}{\ensuremath{\mathit{Non}}}
\newcommand{\ccon}{\ensuremath{\mathcal{C}on}}
\newcommand{\Flock}{\ensuremath{F_\mathit{lock}}}
\newcommand{\Block}{\ensuremath{B_\mathit{lock}}}
\newcommand{\cF}{\ensuremath{\mathcal F}}
\newcommand{\cB}{\ensuremath{\mathcal B}}
\newcommand{\cFf}{\ensuremath{\mathcal F_\mathit{open}}}
\newcommand{\cBf}{\ensuremath{\mathcal B_\mathit{open}}}
\newcommand{\cFu}{\ensuremath{\mathcal F_\mathit{paused}}}
\newcommand{\cBu}{\ensuremath{\mathcal B_\mathit{paused}}}
\newcommand{\cW}{\ensuremath{\mathcal W}}
\newcommand{\cblue}{{\color{blue}\mathit{blue}}}
\newcommand{\cred}{{\color{red}\mathit{red}}}
\begin{document}

\title{BDD-Based Algorithm for SCC Decomposition\texorpdfstring{\\}{ }of Edge-Coloured Graphs}

\author[N.~Bene\v{s}]{Nikola Bene\v{s}}
\address{Masaryk University, Brno, Czech Republic}
\email{\{\texttt{xbenes3},\texttt{brim},\texttt{xpastva},\texttt{safranek}\}\texttt{@fi.muni.cz}}

\author[L.~Brim]{Lubo\v{s} Brim}

\author[S.~Pastva]{Samuel Pastva}

\author[D.~Šafránek]{David \v{S}afr{\'a}nek}

\begin{abstract}
Edge-coloured directed graphs provide an essential structure for modelling and analysis of complex systems arising in many scientific disciplines (e.g. feature-oriented systems, gene regulatory networks, etc.). One of the fundamental problems for edge-coloured graphs is the detection of strongly connected components, or SCCs. 

The size of edge-coloured graphs appearing in practice can be enormous both in the number of vertices and colours. The large number of vertices prevents us from analysing such graphs using explicit SCC detection algorithms, such as Tarjan's, which motivates the use of a symbolic approach. However, the large number of colours also renders existing symbolic SCC detection algorithms impractical.

This paper proposes a novel algorithm that symbolically computes all the monochromatic strongly connected components of an edge-coloured graph. In the worst case, the algorithm performs $O(p\cdot n\cdot\log n)$ symbolic steps, where $p$ is the number of colours and $n$ is the number of vertices. 

We evaluate the algorithm using an experimental implementation based on binary decision diagrams (BDDs). Specifically, we use our implementation to explore the SCCs of a large collection of coloured graphs (up to $2^{48}$) obtained from Boolean networks -- a~modelling framework commonly appearing in systems biology.
\end{abstract}

\maketitle

\section*{Introduction}\label{S:one}

In many scientific disciplines, the processing of massive data
sets represents one of the most important computational tasks.
A variety of these data sets can be modelled in terms of very large multi-graphs, augmented by a specific collection of application-dependent edge attributes. These attributes are often abstractly referred to as colours, and the resulting formalism is called an \emph{edge-coloured graph}~\cite{BANGJENSEN1997,bookgraphs79}. Geographic information systems, telecommunications traffic, or internet networks are prime examples of data that are best represented as such edge-coloured graphs. 

For instance, in social networks, coloured edges can be used to link together groups of nodes related by some specific criteria (Sports, Health, Technology, Religion, etc.). In software engineering, one often speaks about feature-oriented systems~\cite{classen2010model}. In this case, colours represent possible combinations of features, altering the system's behaviour.

Our interest in processing huge edge-coloured graphs is primarily motivated by applications taken from systems biology~\cite{tcbb,GiacobbeGGHPP17} and genetics~\cite{DORNINGER94} where we have to deal not only with giant graphs as measured by the number of vertices and edges but also with large sets of colours. In this case, the graph colours represent valuations of numerous parameters that influence the dynamics of a~biological system~\cite{tcbb,BattPCGMJ10,Bernot05}.

Fundamental graph algorithms such as breadth-first search, spanning tree construction, shortest paths, decomposition into strongly connected components (SCCs), etc., are building blocks of many practical applications.  For the edge-coloured graphs, the primary research focus so far has been on some of the ``classical'' coloured graph problems, like the determination of the chromatic index, finding sub-graphs with a specified colour property (the coloured version of the k-linked problem), alternating edge-coloured cycles and paths, rainbow cliques, monochromatic cliques and cycles, etc.~\cite{Das,Akbari,Alon,BANGJENSEN1997,Thomason,Kano}.

To the best of our knowledge, we are not aware of any work on SCC decomposition specifically for edge-coloured graphs, even though this problem has many important applications. For example, in biological systems, strongly connected components represent the so called attractors of the system. In this case, a specific focus is given to terminal (or bottom) SCCs, but non-terminal (transient) SCCs can also be detrimental to the system's long-term behaviour~\cite{long-lived-transients}. Overall, SCCs play an essential role in determining the system's biological properties, since they may correspond, for example, to the specific phenotypes expressed by a~cell~\cite{choo2016efficient}. 

The valuation of parameters (e.g. the presence of certain genes or external stimuli) in such systems is then represented as edge colours in the state-transition graph. The knowledge of SCCs and how their structure depends on parameters is vital for understanding various biological phenomena~\cite{deritei2016principles,Li16}. 
Other applications where investigation of attractors is crucial include predictions of the global climate change~\cite{Steffen18} or predictions of spreading of infectious diseases such as COVID-19~\cite{Matouk20}.

There is a serious computational problem related to the processing of massive edge-coloured graphs (or even the non-coloured ones) that significantly affects the tractability of SCC decomposition. The graphs often cannot be handled using standard (explicit) representations, since they are too large to be kept in the main memory. Various approaches have been considered to deal with such giant graphs: distributed-memory computation, symbolic data structures for graph representation, or storing the graphs in external memory. We review these approaches in more detail in the related work section.

In~\cite{cmsb2017,ICFEM19} we have initially attacked the SCC decomposition problem for massive edge-coloured graphs by developing a parallel, semi-symbolic algorithm for detection of bottom SCCs. The algorithm uses symbolic structures to represent sets of parameters, while the graph itself is represented explicitly. However, the results have shown that the parallel semi-symbolic algorithm is often not sufficient to tackle graphs representing real-world problems practically. These findings have motivated us to propose a new, entirely symbolic approach.

In this paper,  we consider \emph{edge-coloured multi-digraphs}, i.e., multi-digraphs such that each directed edge has a colour and no two parallel (i.e., joining the same pair of vertices) edges have the same colour. Here, we refer to such graphs simply as \emph{coloured graphs}. For coloured graphs, we can define several notions of strongly connected components involving colours. We consider the simplest case, where the  SCCs are \emph{monochromatic}, that is all their edges have the same colour~\cite{Kiraly14}. This choice is motivated by the application in systems biology, as mentioned above.

\paragraph{Contribution} 
We  propose a novel fully symbolic algorithm for detecting \emph{all}
monochromatic strongly connected components in a coloured graph. This algorithm is in practice significantly faster than what is achievable by na\"ively executing a symbolic SCC decomposition
algorithm for each colour separately. This is because in many applications,
the edges are largely shared among individual
colours~\cite{tcbb} and our algorithm is capable of exploiting this fact.
The algorithm conceptually follows the \emph{lock-step} reachability
approach by Bloem et al.~\cite{bloem2000} for purely monochromatic digraphs. The key
new ingredients behind our algorithm are a~careful orchestration of
the forward and backward reachability for different colours, and
a~colour-aware selection of the pivot set.

\subsection*{Structure of the paper} In Section~1, we recall the notions of strongly connected components and  edge-coloured digraphs, and we state the coloured SCC decomposition problem. In Section~2,  we first briefly introduce the forward-backward decomposition algorithm and the lock-step algorithm for monochromatic graphs. After that, we present the coloured SCC decomposition algorithm together with the proof of correctness and complexity analysis. 

In Section~\ref{section:boolean-networks}, we introduce Boolean networks, discuss their symbolic encoding, and show how they can be translated into coloured graphs suitable for SCC-decomposition. Subsequently, Section~\ref{section:implementation} discusses several practical improvements to the main algorithm (saturation, trimming, and parallelism) which help it scale to larger models, and thus be more practically viable. Finally, Section~4 evaluates the main algorithm (including the improved variants) using a collection of large, real-world Boolean networks. A conclusion is provided in the last section.

This article is an extended version of an article that appeared in the Proceedings of TACAS 2021~\cite{tacas21}.  We  extend the information provided in the TACAS Proceedings with more in-depth technical details of the algorithm and related proof, including explanation of its key steps. Moreover, we extend the implementation and evaluation sections to give the reader more information on how the performance of the algorithm can be improved, and how the algorithm performs using a variety of real-world case studies.

\subsection*{Related Work}

The detection of SCCs in (monochromatic) digraphs is a well-known problem
computable in linear time.
Best serial (explicit) algorithms are
Kosaraju-Sharir~\cite{SHARIR1981} and Tarjan~\cite{Tarjan}, which are both  inherently based on depth-first search. However, these algorithms do not scale for large graphs, e.g., those encountered in model-checking, when using explicit graph representation.
Therefore, alternative approaches to such SCC decomposition have been proposed (e.g. I/O efficient, parallel, or symbolic algorithms).

The algorithm of Jiang~\cite{JIANG1993} gives an I/O-efficient alternative
based on a~combination of depth-first and breadth-first search.

Efficient parallel, distributed-memory algorithms avoid the inherently sequential DFS step~\cite{REIF}
in several different ways. The Forward-Backward algorithm~\cite{FWBW} employs
a~divide-and-conquer approach relying on picking a pivot state and
splitting the graph in three independent (SCC-closed) parts. The
approach of Orzan~\cite{Orzan} uses a different distribution scheme
called a colouring transformation, employing a set of prioritised colours to
split the graph into many parts at once. The OWCTY-Backward-Forward (OBF) approach is
proposed in~\cite{Barnat09}. It recursively splits the graph in a number of independent sub-graphs called OBF slices and applies to each slice the One-Way-Catch-Them-Young (OWCTY) technique. In~\cite{Slota14}, the authors utilise
variants of the Forward-Backward and Orzan's algorithms for optimal
execution on shared-memory multi-core platforms. Finally, Bloemen et
al.~\cite{Bloemen16} present an on-the-fly parallel algorithm utilising a swarm of DFS searches, showing promising speed-up for large graphs containing large SCCs. On another end, GPU-accelerated approaches to computing
SCCs have been addressed for example in~\cite{BarnatBBC11,HRO13,LI2014,Dragan14}.

Computing SCCs of (monochromatic) digraphs symbolically is another way to handle giant graphs and has been thoroughly explored
in literature. As in the case of efficient parallelisation, depth-first
search is not feasible in the symbolic framework~\cite{GentiliniPP08}. In~consequence, many DFS-based algorithms cannot be
easily revised to work with symbolically represented graphs. An algorithm based on forward and backward reachability performing $\mathcal{O}(n^2)$ symbolic steps was presented by Xie and Beerel in~\cite{xie2000}. Bloem et al.~present an improved $\mathcal{O}(n \cdot \log n)$ algorithm in~\cite{bloem2000}. Finally, an $\mathcal{O}(n)$ algorithm was presented by Gentilini et al.~in~\cite{gentilini2003,GentiliniPP08}. This bound has been proven to be tight in~\cite{chatterjee2018}. In~\cite{chatterjee2018}, the authors argue that the algorithm from~\cite{gentilini2003} is optimal even when considering more fine-grained complexity criteria, like the diameter of the graph and the diameters of the individual components. Ciardo et al.~\cite{Ciardo11} use the idea of saturation~\cite{Ciardo06} to speed up state exploration within the Xie-Beerel algorithm, and show a saturation-based technique for computing the transitive closure of the graph's edge relation. 

Besides these generic algorithms, there have also been symbolic SCC decomposition methods to deal with large graphs generated specifically by Boolean networks~\cite{MizeraIEEE,YUAN2019}. However, these primarily target detection of bottom SCCs. Methods in this area are also often incomplete, for example focusing on detection of single-state or small bottom SCCs~\cite{zhang-small-attractors}. As such, they generally perform better than an exhaustive symbolic SCC detection in their respective application domains, but are inherently limited in scope.

\section{Problem Definition}\label{S:two}

As we have already stated in the introductory section, the SCC decomposition problem for edge-coloured graphs has remained mostly unexplored until now. We thus start this paper by introducing and formalising the notion of \emph{coloured SCC decomposition} itself and state some of its basic properties.

Before giving exact definitions, it might be instructive
to discuss the substance of the coloured SCC decomposition intuitively. There are
several ways of capturing the notion of a~``coloured connected component''.
One of them is that of a colour-connectivity first introduced by Saad~\cite{Saad92}. It is based on
alternating paths in which successive edges differ in colour. However, there is no
unique, universally acceptable notion of a coloured component.

In the biological applications we have in mind (i.e.~Boolean networks), we want to identify a coloured component as a~coloured collection of SCCs---a~collection where for every colour there is
a set of all relevant monochromatic SCCs. Such a setting leads us to represent SCCs
in the form of a relation. To that end, we first introduce such a relation for
monochromatic graphs (Section~\ref{sec:monographs}) and afterwards extend it
to edge-coloured graphs (Section~\ref{sec:colourgraphs}). The relation-based
approach gives us also the advantage of allowing a feasible symbolic encoding
of the problem.

\subsection{Graphs and Strongly Connected Components}
\label{sec:monographs}

Let us first recall the standard definitions of a~directed graph and its
strongly connected components:

\begin{defi}\label{def:graph}
	A \emph{directed graph} is a tuple $G = (V, E)$ where $V$ is a set of graph \emph{vertices} and $E \subseteq V \times V$ is a set of graph \emph{edges}.
\end{defi}

We are going to use the word \emph{graph} to mean \emph{directed graph} in
the following.
We write $u \to v$ when $(u, v) \in E$ and $u \to^* v$ when $(u, v) \in E^*$,
the reflexive and transitive closure of $E$. We say that $v$ is
\emph{reachable} from $u$ if $u \to^* v$.
The reachability relation allows us to decompose a~graph into strongly
connected components, defined as follows:

\begin{defi}
	In a~graph $G = (V, E)$, a \emph{strongly connected component} (SCC)
	is a~maximal set $W \subseteq V$ such that for all $u, v \in W$, $u
	\to^* v$ and $v \to^* u$. For a fixed $v \in V$, we write $SCC(G, v)$
	to denote the SCC of $G$ that contains $v$.
\end{defi}

If the graph $G$ is clear from the context, we can simply write $SCC(v)$.
A set of vertices $S \subseteq V$ is said to be \emph{SCC-closed} if every SCC
$W$ is either fully contained inside $S$ ($W \subseteq S$), or in its
complement ($W \subseteq V \setminus S$). Notice that given a vertex $v$, the
set of all vertices reachable from $v$, as well as the set of all vertices
that can reach $v$, are both SCC-closed.

A pivotal problem in computer science is to find the SCC decomposition of~$G$.
As mentioned above, we represent the decomposition in the form of an
\emph{equivalence relation} $\rscc$ such that the individual SCCs are exactly
the equivalence classes of $\rscc$. The relation-based formulation of the SCC
decomposition problem is the following:

\begin{problem2}[SCC decomposition]
	Given a graph $G = (V, E)$, find the~\emph{SCC decomposition relation}
	$\rscc \subseteq V \times V$ such that $(u,v) \in \rscc$ if and only if $SCC(u) = SCC(v)$.
\end{problem2}

Note that $SCC(u)$ can be obtained by fixing the first attribute of $\rscc$, i.e.
$SCC(u) = \{ v \mid (u, v) \in \rscc \}$.
We refer to such operation as \emph{section} and denote it in the following way: $SCC(u) = \rscc(u, \_)$ (the concept is properly formalised later as part of Fig.~\ref{tab:operations}).
Here, $u$ is the specific value of an attribute at which the section is taken,
and $\_$ is used in place of the attributes that remain unchanged. Such
notation naturally extends to arbitrary relations.

\subsection{Coloured SCC Decomposition Problem}\label{sec:colourgraphs}

We now lift the formal framework to the coloured setting. An edge-coloured graph can be
seen as a succinct representation of several different graphs, all sharing the same
set of vertices.
To emphasise the difference from the standard graphs (i.e. Definition~\ref{def:graph}), we sometimes call the standard graphs
\emph{monochromatic}.

\begin{defi}
	An \emph{edge-coloured directed multi-graph} (coloured graph for short) is a
	tuple $\cG = (V, C, E)$ where $V$ is a set of vertices, $C$ is a set of
	colours and $E \subseteq V \times C \times V$ is a~coloured edge relation.
\end{defi}

We also write $u \xrightarrow{c} v$ whenever $(u, c, v) \in E$ and
use $\creach{c}$ to denote the reflexive and transitive closure
of $\xrightarrow{c}$. We say that $v$ is $c$-reachable from $u$ if
$u \creach{c} v$, i.e.~there is a~path from $u$ to $v$ using
only $c$-coloured edges.
By fixing a~colour $c \in C$ and keeping only the $c$-coloured edges (with the
colour attribute removed), we obtain a~monochromatic graph
$\cG(c) = (V, \{(u, v) \mid (u, c, v) \in E\})$. We call this graph the
\emph{monochromatisation of $\cG$ with respect to $c$}.
Intuitively, one can view the elements of $C$ as a type of graph
parametrisation where the edge structure of the graph changes based on the
specific $c \in C$.

The SCC decomposition relation $\rscc$ is extended to the coloured SCC
decomposition relation $\crscc$ by relating every colour $c\in C$ with all
SCCs of the monochromatisation $\cG(c)$. In consequence, the SCC decomposition
problem is then lifted to the coloured SCC decomposition problem as follows:

\begin{problem2}[Coloured SCC decomposition]
	Given a coloured graph $\cG = (V, C, E)$, find the \emph{coloured SCC
		decomposition relation} $\crscc \subseteq V \times C \times V$ satisfying
	$(u,c,v) \in \crscc$ if and only if $(u,v) \in \rscc$ of $\cG(c)$.
\end{problem2}

From this definition, we can immediately observe the following properties
about the relationship of $\crscc$ with the terms which we have defined before:
\begin{itemize}
	\item $\rscc$ of a monochromatisation $\cG(c)$ is exactly the section
	$\crscc(\_, c, \_)$;
	\item $SCC(\cG(c), v)$ is exactly the section $\crscc(v, c, \_)$, or equivalently, $\crscc(\_, c, v)$ (since $\crscc$ and $\rscc$ are symmetric with regards to $V$).
\end{itemize}
From this, it should be immediately apparent that $\crscc$ contains all
components of the underlying monochromatisations.

\section{Algorithm}

Conceptually, our algorithm follows the \emph{lock-step} reachability approach by Bloem~\cite{bloem2000} for monochromatic graphs. The lock-step algorithm itself is based on the basic forward-backward decomposition algorithm~\cite{xie2000}.
In this section, we first briefly introduce these two algorithms to explain better the key ideas behind our approach and, in particular, to explain the main difficulties encountered in employing the concepts of these algorithms to edge-coloured graphs.
Although the algorithms were originally presented as producing a~set of SCCs,
we reformulate them slightly using the equivalent relation-based approach as
explained in the previous section.
After that, we present the coloured SCC decomposition algorithm.
However, before we dive into the algorithmics, let us first briefly discuss
the computation model we are using.

\subsection{Symbolic Computation Model}\label{ssec:symb}

As a complexity measure of our algorithm, we consider the number of
symbolic steps, or more specifically, symbolic set and relation operations that the
algorithm performs. As is customary, we assume that sets of vertices ($V$) and colours ($C$) can be represented symbolically (for example, using reduced ordered binary decision diagrams~\cite{bryant86}) as well as any relations over these sets. In particular, we often talk about \emph{coloured vertex sets}, by which we mean the subsets of $V \times C$.

\begin{figure}[t]
	\centering
	\setlength\tabcolsep{8 pt}
	\renewcommand{\arraystretch}{1.4}
	\begin{tabular}{ | C{3cm} | C{4cm} | C{6cm} | }
		\hline
		\multicolumn{3}{|c|}{Standard set operations} \\ \hline
		pick element & $\textsc{Pick}(A)$ & arbitrary $x \in A$ \\
		union & $A \cup B$ & $\{ x \mid x \in A \lor x \in B \}$ \\
		intersection & $A \cap B$ & $\{ x \mid x \in A \land x \in B \}$ \\
		difference & $A \setminus B$ & $\{ x \mid x \in A \land x \not\in B \}$ \\
		product & $A \times B$ & $\{ (x,y) \mid x \in A \land y \in B \}$ \\
		\hline
		\multicolumn{3}{|c|}{Relation manipulation ($R \subseteq S_1 \times \ldots \times S_n$)} \\ \hline
		$i$-th section at $x$ & $\sigma_i(x, R)$ & $\{ (y_1, \ldots, y_{i-1}, y_{i+1}, \ldots, y_n) \mid (y_1, \ldots, y_{i-1}, x, y_{i+1}, \ldots, y_n) \in R \}$ \\
		existential quantification of the~$i$-th element & $\exists_i(R)$ & $\bigcup_{x \in S_i} \sigma_i(x, R)$ \\
		swap & $\textsc{Swap}(R \subseteq A \times B)$ & $\{ (y, x) \in B \times A \mid (x, y) \in R \}$ \\
		\hline
		\multicolumn{3}{|c|}{Derived operations ($G = (V, E), \cG = (V, C, E)$)} \\ \hline
		colours & $\textsc{Colours}(A \subseteq V \times C)$ &
		$\exists_1(A)$\\
		pre-image & $\textsc{Pre}(G, A \subseteq V)$& $\exists_2((V \times A) \cap E)$ \\
		post-image & $\textsc{Post}(G, A \subseteq V)$ & $\exists_1((A \times V) \cap E)$ \\
		coloured pre-image & $\textsc{Pre}(\cG, A \subseteq V \times C)$ & $\exists_3((V \times \textsc{Swap}(A)) \cap E)$ \\
		coloured post-image & $\textsc{Post}(\cG, A \subseteq V \times C)$ & $\textsc{Swap}(\exists_1((A \times V) \cap E))$  \\
		coloured join & $\textsc{Join}(A \subseteq V \times C)$ & $(V \times \textsc{Swap}(A)) \cap (A \times V)$ \\
		\hline
	\end{tabular}
	\caption{Summary of symbolic operations that appear in the presented algorithms. The derived operations can be implemented using the standard and relational operations. However, typically they also have a slightly more efficient direct implementations.}%
	\label{tab:operations}
\end{figure}

Aside from normal set operations (union, intersection, difference, product and element selection), we also require some basic relational operations, all of which we outline in Figure~\ref{tab:operations}. These extra operations tend to appear in other applications as well (such as symbolic model checking~\cite{BurchCMDH92}), and are thus typically already available in mature symbolic computation packages.

Finally, there are several derived operators that are partially
specific to our application to coloured graphs. However, these can be
constructed using standard set and relation operations.
The intuitive meaning of the derived operators is as follows:
$\textsc{Colours}$ returns all the colours that appear in the given coloured
vertex set. $\textsc{Pre}$ and $\textsc{Post}$ compute the pre- and post-image
of a (monochromatic or coloured) set of vertices, i.e.~the set of successors
or predecessors of all the vertices in the given set, respectively.
Finally, $\textsc{Join}$ takes a~coloured vertex set $A$ and computes
the set $\{(u, c, v) \mid (u, c) \in A, (v, c) \in A\}$.

\subsection{Forward-Backward Algorithm}

To symbolically compute the SCCs of a~graph $G = (V, E)$, Xie and
Beerel~\cite{xie2000} observed that for any vertex $v \in V$,
the intersection $W = F \cap B$
of the forward reachable vertices $F = \{ v' \in V \mid v \to^* v' \}$ and the
backward reachable vertices $B = \{ v' \in V \mid v' \to^* v \}$ is exactly
the strongly connected component of $G$ which contains $v$.

The algorithm thus picks an arbitrary \emph{pivot} $v \in V$, and divides the
vertices of the graph into four disjoint sets: $W$, $F \setminus W$, $B
\setminus W$ and $V \setminus (F \cup B)$. This is illustrated graphically in Figure~\ref{fig:algorithms}~(left). The set $W$ is then immediately
reported as an SCC of the graph, and added into the component relation: $\rscc
\gets \rscc \cup (W\times W)$. It is easy to see that every other SCC is
fully contained within one of the three remaining sets (they are SCC-closed), and the algorithm thus
recursively repeats this process independently in each set.

The correctness of the algorithm follows from the initial observation and the
fact that every vertex eventually appears in $W$ (either as a~pivot or as
a~result of $F \cap B$). In the worst case, the algorithm performs $O(|V|^2)$
symbolic steps, since every vertex is picked as a~pivot at most once and the
computation of $F$ and $B$ requires at most $O(|V|)$
\textsc{Pre}/\textsc{Post} operations.

\begin{figure}
	\centering
	\includegraphics[width=0.4\linewidth]{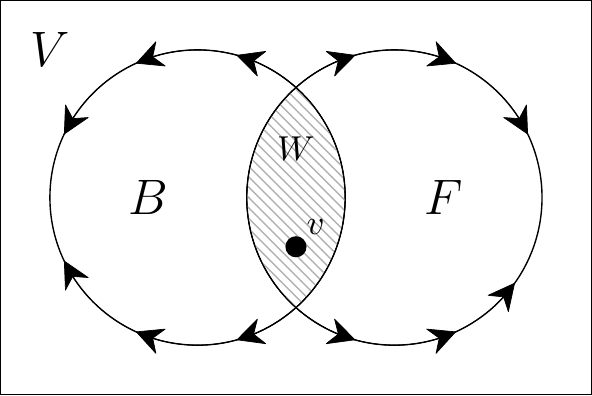}
	\hspace{0.2cm}
	\includegraphics[width=0.4\linewidth]{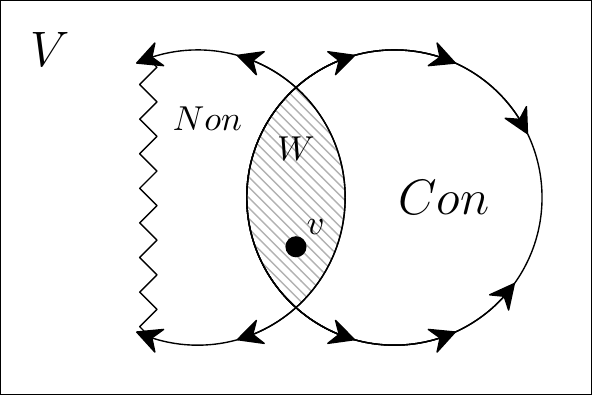}
	\caption{Illustration of the difference between the forward-backward algorithm (left) and the lock-step algorithm (right). On the left, we fully compute both backward ($B$) and forward ($F$) reachable sets from the pivot $v$, identifying $W$ as $F \cap B$. On the right, without loss of generality, assume $F$ is fully computed first. It thus becomes converged ($Con$) and the computation of $B$ ($Non$) is stopped before it is fully explored. }
	\label{fig:algorithms}
\end{figure}

\subsection{Lock-step Algorithm}

To improve the efficiency of the forward-backward algorithm, the lock-step
approach~\cite{bloem2000} uses another important observation:
To compute $W$, it is not
necessary to fully compute both $F$ and $B$; only the smaller (in terms of diameter) of the two sets
needs to be entirely known. With this observation, the computation of $F$ and
$B$ can be modified in the following way:
Instead of computing $F$ and $B$ one after the other, the computation is
\emph{interleaved} in a step-by-step manner (dovetailing). When one of the
sets is fully computed, the computation of the second set is stopped.
Let us call the computed set \emph{converged} and denote it by $\con$, and the
unfinished set \emph{non-converged} and denote it by $\noncon$. This situation is illustrated in Figure~\ref{fig:algorithms}~(right).

However, even when $\con$ is fully known, we still need to finish the
computation of states in $\noncon$ that are inside $\con$ to discover the
whole component $W$. This is necessary if there are vertices $w$ in $W$ whose
forward distance from $v$ (i.e.~the length of the path $v \to^* w$) is short
while their backward distance (the length of the path $w \to^* v$) is long,
or vice versa. Such vertices are thus only discovered in one of the two
reachability procedures and still need to be discovered by the other one
to identify the whole component.
However,
an important observation is that only the vertices already inside $\con$ need
to be considered in this phase.

After this, the SCC can be identified and reported just as in the
forward-backward algorithm. Finally, the recursion now continues in sets $\con
\setminus W$ and $V \setminus \con$. This is due to $\noncon$ being not fully
computed; we cannot guarantee that no SCC overlaps outside of $\noncon$ ($\noncon$ is not necessarily SCC-closed).

The algorithm is still correct because every vertex is eventually either
picked as a pivot or discovered in some $W$. Furthermore, due to the way
$\con$ and $\noncon$ are computed guarantees that $W$ is still a whole SCC\@.
In terms of complexity, the algorithm performs $O(|V| \cdot \log |V|)$
symbolic steps in the worst case. To see why this is true, we may observe that
every vertex appears in $W$ exactly once, and that the smaller of the two
sets $\con \setminus W$ and $V \setminus \con$, let us call it $S$,
is always smaller than $\frac{|V|}{2}$. The authors then argue that
the price of every iteration can be attributed (up to a multiplicative
constant) to the vertices in $S \cup W$ and that every vertex appears in $S$
at most $O(\log |V|)$-times.

\subsection{Coloured Lock-step Algorithm}

When developing an algorithm for coloured graphs, one needs to deal with multiple
challenges which do not appear for monochromatic graphs and require
careful consideration. In the following, we refer to the pseudocode in
Algorithm~\ref{algo:symbolic}.

\begin{algorithm}
	\SetKwProg{Fn}{Function}{}{}
	\Fn{\upshape\textsc{ColouredSCC}$(\cG = (V, C, E))$}{
		$\crscc \subseteq (V \times C \times V) \gets \emptyset$\;
		$\textsc{Decomposition}(\cG, \crscc, V \times C)$\;
		\Return $\crscc$\;
	}
	\BlankLine
	\Fn{\upshape\textsc{Decomposition}$(\cG = (V, C, E), \crscc \subseteq (V \times C \times V), \mathcal{V} \subseteq (V \times C))$}{
		\lIf{$\mathcal{V} = \emptyset$}{\Return}
		$\cF, \cB, \cFf, \cBf \subseteq (V \times C) \gets \textsc{Pivots}(\mathcal{V})$\;
		$\cFu, \cBu \subseteq (V \times C) \gets \emptyset$\;
		$\Flock, \Block \subseteq C \gets \emptyset$\;
		\While{$\Flock \cup \Block \subset \textsc{Colours}(\mathcal{V})$}{\label{alg:ls-start}
			$\cFf \gets (\textsc{Post}(\cG, \cFf) \cap \mathcal V) \setminus \cF$\;\label{alg:lockstep-post}
			$\cBf \gets (\textsc{Pre}(\cG, \cBf) \cap \mathcal V) \setminus \cB$\;\label{alg:lockstep-pre}
			$\Flock \gets \Flock \cup (\textsc{Colours}(\mathcal V) \setminus \textsc{Colours}(\cFf) \setminus \Block)$\;
			$\Block \gets \Block \cup (\textsc{Colours}(\mathcal V) \setminus \textsc{Colours}(\cBf) \setminus \Flock)$\;\label{alg:lock}
			$\cFu \gets \cFu \cup (\cFf \cap (V \times \Block))$\;
			$\cBu \gets \cBu \cup (\cBf \cap (V \times \Flock))$\;
			$\cFf \gets \cFf \setminus (V \times \Block)$\;
			$\cBf \gets \cBf \setminus (V \times \Flock)$\;
			$\cF \gets \cF \cup \cFf$\;
			$\cB \gets \cB \cup \cBf$\;
		}\label{alg:ls-end}
		$\ccon \subseteq V \times C \gets (\cF \cap (V \times \Flock)) \cup (\cB \cap (V \times \Block))$\;\label{alg:con}
		$\cFf \gets \cFu \cap \ccon$\;\label{alg:con-apply}
		$\cBf \gets \cBu \cap \ccon$\;\label{alg:con-apply-2}
		\While{$\cFf \ne \emptyset \lor \cBf \ne \emptyset$}{\label{alg:rest-start}
			$\cFf \gets (\textsc{Post}(\cG, \cFf) \cap \ccon) \setminus \cF$\;
			$\cBf \gets (\textsc{Pre}(\cG, \cBf) \cap \ccon) \setminus \cB$\;
			$\cF \gets \cF \cup \cFf$\;
			$\cB \gets \cB \cup \cBf$\;
		}\label{alg:rest-end}
		$\cW \subseteq V \times C \gets \cF \cap \cB$\;\label{alg:scc}
		$\crscc \gets \crscc \cup \textsc{Join}(\cW)$\;
		$\textsc{Decomposition}(\cG, \crscc, \mathcal{V} \setminus \ccon)$\;
		$\textsc{Decomposition}(\cG, \crscc, \ccon \setminus \cW)$\;
	}
	\BlankLine
	\Fn{\upshape\textsc{Pivots}$(\mathcal{V})$}{
		$\mathcal{P} \subseteq (V \times C) \gets \emptyset$; $\mathcal{V}' \subseteq (V \times C) \gets \mathcal{V}$\;
		\While{$\mathcal{V}' \ne \emptyset$}{
			$(v, c) \gets \textsc{Pick}(\mathcal{V}')$\;
			$\mathcal{P} \gets \mathcal{P} \cup ( \{ v \} \times \sigma_1(v, \mathcal{V'})) $\;
			$\mathcal{V'} \gets \mathcal{V'} \setminus (V \times \textsc{Colours}(\mathcal{P}))$\;
		}
		\Return $\mathcal P$\;
	}
	\caption{Symbolic Coloured SCC Decomposition}\label{algo:symbolic}
\end{algorithm}

An important observation is that the structure of components in the graph can change arbitrarily with
respect to the graph colours. In consequence, our algorithm cannot simply operate with sets of
graph vertices as the normal algorithm would. To that end, we use the
notion of coloured vertex sets as introduced in
Section~\ref{ssec:symb} where the symbolic operations we perform on
these sets have been described.

\paragraph{Pivot selection} Initially, the algorithm starts with all vertices and colours, i.e.~the full
set $V \times C$. However, as the components are discovered, the intermediate results $\mathcal{V}$ may contain different vertices appearing only for certain subsets of $C$. As a result, we often cannot pick a~single pivot vertex that
would be valid for all considered colours. Instead, we aim to pick a~\emph{pivot set} $P \subseteq V \times C$ such that for every colour
that still appears in $\mathcal V$,
the set contains \emph{exactly} one vertex. Alternatively, one can also
view the pivot set as a (partial) function from $C$ to $V$. This is done
in the \textsc{Pivots} function. In the following discussion of the algorithm,
we write $c$-coloured pivot to mean the vertex $u$ such that
$(u, c)$ is found in the coloured set returned by \textsc{Pivots}
in the current iteration (for all colours still present in $\mathcal{V}$). 

Please note that the presented \textsc{Pivots} routine is rather naive, as it has to explicitly iterate all the pivot vertices, whose number can be substantial in the worst case. However, as presented, it should be easy to implement for basically any type of coloured graphs, regardless of the underlying representation. In the implementation section, we show \textsc{Pivots} can be re-implemented in the domain of BDDs such that it is guaranteed to always require only $\mathcal{O}(\log|V|)$ symbolic operations. 

\paragraph{Coloured lock-step (phase one)} The lock-step reachability procedure also cannot operate as in a standard
graph. First of all, there can be colours where the forward reachability
converges first, as well as colours where this happens for backward
reachability. The algorithm thus has to account for both options
simultaneously. Second, for each colour, the reachability can converge in a
different number of steps. To deal with this problem, we introduce the
$\Flock$ and $\Block$ variables. These store the mutually disjoint sets of
colours for which the forward and backward reachability procedures have already
converged. The lock-step procedure then terminates when $\Flock$ and $\Block$
contain all the colours that appear in $\mathcal V$.

Throughout the algorithm, we keep track of several coloured-set variables.
The first two, $\cF$ and $\cB$, represent the forward and backward reachable
sets, respectively.
This means that for every colour $c$ present in $\mathcal V$, if $u$ is the
$c$-coloured pivot, every $(v, c) \in \cF$ satisfies $u \creach{c} v$
and every $(v, c) \in \cB$ satisfies $v \creach{c} u$.
Furthermore, if $c \in \Flock$ then $\cF$ contains exactly all such pairs;
similarly for $\Block$ and $\cB$.

We say that a coloured vertex pair $(v, c)$ has been \emph{forward expanded}
or \emph{backward expanded} in the current iteration of the algorithm, if
there has been a call to the \textsc{Post} or \textsc{Pre} symbolic operation
with $(v, c)$ being an element of the coloured set argument.
To track which reachable coloured vertices are to be expanded later,
also called the \emph{frontiers} of the reachability sets,
we have the four variables $\cFf$, $\cFu$, $\cBf$, $\cBu$.

The frontier of $\cF$ is the union $\cFf \cup \cFu$. The sets $\cFf$ and $\cFu$
are disjoint: $\cFf$ involves those colours for which the
lock-step reachability procedure has not finished yet, i.e.~the colours that
are neither in $\Flock$ nor in $\Block$,
while $\cFu$ represents
the part of the frontier whose exploration is currently paused due to the fact
that its colours are in $\Block$.
Note that there may be no pair $(v, c)$ of the forward frontier with
$c \in \Flock$ as that means that the exploration of the $c$-coloured
forward-reachable set is complete.
A symmetric role is played by the sets $\cBf$ and $\cBu$.

In the first while loop (lines~\ref{alg:ls-start}--\ref{alg:ls-end}), we
compute the reachability sets in the lock-step manner. Once a~reachability set
is completed for some colours (i.e.,~there are no vertices to expand with those
colours), we add the colours to the corresponding $\Flock$ or $\Block$
variable. Note that we ensure that $\Flock$ and $\Block$ remain disjoint even
if the two reachability procedures converged at the same time for certain
colours---see line~\ref{alg:lock}.
We use $\Flock$ and $\Block$ to split the newly computed frontier sets into
the parts that are to be expanded in the next iteration ($\cFf$, $\cBf$)
and the parts currently left unexpanded ($\cFu$, $\cBu$).

Note that during the computation of $\textsc{Post}$ and $\textsc{Pre}$ on lines~\ref{alg:lockstep-post} and~\ref{alg:lockstep-pre}, we intersect the resulting set with $\mathcal{V}$. This step is not necessary for correctness, but as the algorithm divides $V \times C$ into smaller sets in each recursive call to \textsc{Decomposition}, it can happen that the set of states \emph{reachable} from $\mathcal{V}$ is substantially larger than $\mathcal{V}$ itself. In such cases, this intersection effectively restricts the computation of $\textsc{Post}$ and $\textsc{Pre}$ to the sub-graph of $\cG$ induced by $\mathcal{V}$.

\paragraph{Component identification (phase two)} After the first while loop terminates, we compute the set $\ccon$ that is an analogue
for the converged set of the original lock-step algorithm
(line~\ref{alg:con}).
As already suggested above and unlike the original
algorithm, this set cannot be just $\cF$ or $\cB$, but is instead a~mixture of
both, depending on the converged colours. To~compute this set, we use the
$\Flock$ and $\Block$ variables. 

Once $\ccon$ is computed, $\cFf$ and $\cBf$ are restarted using the converged portion of $\cFu$ and $\cBu$ (lines~\ref{alg:con-apply} and~\ref{alg:con-apply-2}). The second while loop (lines~\ref{alg:rest-start}--\ref{alg:rest-end}) can then
complete the unfinished forward and backward reachability set, now restricted to
the inside of the converged set. The intersection of $\cF$ and $\cB$ then
forms a~coloured set $\cW$ with the property that for all
$c \in \textsc{Colours}(\mathcal V)$, $\cW(\_, c)$ is a~strongly
connected component of $\cG(c)$. We create the corresponding relation
using the \textsc{Join} operation, add this relation to the resulting
$\crscc$, and recursively call the whole procedure with
$\mathcal V \setminus \ccon$ and $\ccon \setminus \cW$ as the
base sets.

\paragraph{Comments on the coloured approach} Let us note that there is possibly another approach to processing coloured graphs. Instead of trying
to work with all colours still appearing in the coloured vertex set at once,
we could fork a~new recursive procedure whenever the colour set
splits due to the differences in the graph structure. For example, instead
of picking multiple coloured vertices as pivots, one could pick a single
vertex with a valid subset of colours and then address the remaining colours
in a separate recursive call. Similarly, instead of a single recursive $\textsc{Decomposition}$ call with $\ccon \setminus \cW$, we could consider two calls, one with the $\mathcal{F}$ portion of $\ccon$ and the other with the $\mathcal{B}$ portion of $\ccon$ (note that these are colour-disjoint since each colour can converge only in one of the two sets).

While such an approach could be to some extent
beneficial in a massively parallel environment where each recursive call can
be executed independently on a~new CPU, the amount of forking in large systems
will soon become unreasonable. More importantly, it defeats the purpose of
the symbolic representation, which aims to minimise the number of symbolic
operations. 

\begin{figure}
	\centering
	\begin{tikzpicture}
	\begin{scope}[every node/.style={minimum width=1.2em,inner sep=0,draw,thick,circle,font={\small\itshape}},x=1.5cm,y=1.5cm]

	\foreach \p / \x / \y in {/0/0,f2/-2/-4,b2/+2/-4,f3/-2/-6,b3/+2/-6,f4/-2/-8,b4/+2/-8,g/0/-10}
	\node (\p a) at (\x - 1, \y) {a};

	\node[draw=red] (f1a) at (-3, -2) {a};
	\node[draw=red] (b1a) at (+1, -2) {a};

	\foreach \p / \x / \y in {f1/-2/-2,b1/+2/-2,f2/-2/-4,b2/+2/-4,f3/-2/-6,b3/+2/-6,f4/-2/-8,b4/+2/-8,g/0/-10}
	\node (\p b) at (\x, \y) {b};

	\node[double] (b) at (0, 0) {b};

	\foreach \p / \x / \y in {/0/0,f2/-2/-4,f3/-2/-6,b3/+2/-6,f4/-2/-8,b4/+2/-8,g/0/-10}
	\node (\p c) at (\x + 1, \y) {c};

	\node[draw=blue] (f1c) at (-1, -2) {c};
	\node[draw=blue] (b1c) at (+3, -2) {c};
	\node[draw=red] (b2c) at (+3, -4) {c};

	\foreach \p / \x / \y in {/0/0,f1/-2/-2,b2/+2/-4,f3/-2/-6,b3/+2/-6,f4/-2/-8,b4/+2/-8,g/0/-10}
	\node (\p d) at (\x - 1, \y - 1) {d};

	\node[draw=red] (b1d) at (+1, -3) {d};
	\node[draw=red] (f2d) at (-3, -5) {d};

	\foreach \p / \x / \y in {/0/0,f2/-2/-4,b2/+2/-4,b3/+2/-6,f4/-2/-8,b4/+2/-8,g/0/-10}
	\node (\p e) at (\x, \y - 1) {e};

	\node[draw=blue] (f1e) at (-2, -3) {e};
	\node[draw=red] (b1e) at (+2, -3) {e};
	\node[draw=red, dashed] (f3e) at (-2, -7) {e};

	\foreach \p / \x / \y in {/0/0,f2/-2/-4,f3/-2/-6,f4/-2/-8,b4/+2/-8,g/0/-10}
	\node (\p f) at (\x + 1, \y - 1) {f};

	\node[draw=blue] (f1f) at (-1, -3) {f};
	\node[draw=red] (b1f) at (+3, -3) {f};
	\node[draw=blue, dashed] (b2f) at (+3, -5) {f};
	\node[draw=blue, dashed] (b3f) at (+3, -7) {f};
	\end{scope}

	\begin{scope}[on background layer]
	\foreach \n in {b,f1b,b1b,f2b,b2b,b2c,b2f,f3b,f3e,b3b,b3c,b3f,f4b,f4e,f4f,b4b,b4c,b4e,b4f,gb,ge,gf} {
		\begin{scope}
		\clip ($(\n) + (-.5, -.5)$)
		-- ($(\n) + (.5, .5)$)
		-- ($(\n) + (.5, -.5)$)
		--cycle;
		\fill[red!50!white] (\n) circle (0.6em);
		\end{scope}
		\begin{scope}
		\clip ($(\n) + (-.5, -.5)$)
		-- ($(\n) + (.5, .5)$)
		-- ($(\n) + (-.5, .5)$)
		--cycle;
		\fill[blue!50!white] (\n) circle (0.6em);
		\end{scope}
	}
	\foreach \n in {f1a,b1a,b1d,b1e,b1f,f2a,f2d,b2a,b2d,b2e,f3a,f3d,b3a,b3d,b3e,f4a,f4d,b4a,b4d,ga,gd} {
		\fill[red!50!white] (\n) circle (0.6em);
	}
	\foreach \n in {f1c,f1e,f1f,b1c,f2c,f2e,f2f,f3c,f3f,f4c,gc} {
		\fill[blue!50!white] (\n) circle (0.6em);
	}
	\end{scope}

	\foreach \p in {,f1,b1,f2,b2,f3,b3,f4,b4} {
		\draw[->,>=stealth',blue,thick,shorten >=1pt,shorten <=1pt]
		(\p a) edge[bend right=15] (\p d)
		(\p b) edge[bend right=15] (\p c)
		edge[bend left=15] (\p e)
		edge[bend left=10] (\p f)
		(\p c) edge[bend right=15] (\p b)
		(\p d) edge[bend right=15] (\p e)
		(\p e) edge[bend right=15] (\p f)
		(\p f) edge[bend left=15] (\p c)
		;

		\draw[->,>=stealth',red,thick,shorten >=1pt,shorten <=1pt]
		(\p a) edge[bend left=15] (\p b)
		edge[bend left=15] (\p d)
		(\p b) edge[bend left=15] (\p a)
		(\p c) edge[bend left=15] (\p f)
		(\p d) edge (\p b)
		edge[bend left=15] (\p e)
		(\p e) edge[bend left=15] (\p b)
		edge[bend left=15] (\p f)
		(\p f) edge[bend left=10] (\p b)
		;
	}

	\draw[->,>=stealth',blue,thick,shorten >=1pt,shorten <=1pt]
	(gb) edge[bend right=15] (gc)
	edge[bend left=15] (ge)
	edge[bend left=10] (gf)
	(gc) edge[bend right=15] (gb)
	(ge) edge[bend right=15] (gf)
	(gf) edge[bend left=15] (gc)
	;

	\draw[->,>=stealth',red,thick,shorten >=1pt,shorten <=1pt]
	(ga) edge[bend left=15] (gb)
	edge[bend left=15] (gd)
	(gb) edge[bend left=15] (ga)
	(gd) edge (gb)
	edge[bend left=15] (ge)
	(ge) edge[bend left=15] (gb)
	edge[bend left=15] (gf)
	(gf) edge[bend left=10] (gb)
	;

	\node at (-4, -2) {$\mathcal F$};
	\node at (+4, -2) {$\mathcal B$};

	\node[align=center] at (3.5, -0.5) {\small$\textsc{Pivots}(\mathcal V) = $\\
		\small$\{(b, \cblue), (b, \cred)\}$};

	\draw[dashed,gray] (-5, -8) -- (5, -8);
	\draw[dashed,gray] (-5, -11) -- (5, -11);
	\node[anchor=south] at (0,-8) {\small$\cblue \in \Flock$};
	\node[anchor=south] at (0,-11) {\small$\cred \in \Block$};
	\node[align=center] at (0, -12.7) {\small after\\\small second\\\small phase};

	\end{tikzpicture}
	\caption{An illustration of the algorithm execution.
		There are two colours: $\cred$ and $\cblue$.
		The left column represents the forward reachability part; the right column
		represents the backward reachability part.
		Filled nodes are contained in $\cF$, $\cB$ respectively.
		Nodes with solid coloured border are in $\cFf$, $\cBf$; nodes with dashed
		coloured border are in $\cFu$, $\cBu$.
		The bottom-most picture represents the resulting coloured set $\mathcal W$.
	}\label{fig:example}
\end{figure}

\paragraph{Example} The execution of one iteration of the algorithm is illustrated in
Figure~\ref{fig:example}. Here, we have an edge-coloured graph with six vertices
and two colours (red and blue).
The top-most picture represents the initial situation after we have chosen
the pivots; in this case, $\{(b, \mathit{blue}), (b, \mathit{red})\}$.
The next four rows illustrate the first phase (the first while loop) of the
algorithm. After the second iteration of the loop, the blue colour becomes
locked in $\Flock$, and thus $(f, \mathit{blue})$ is not expanded in the
backward reachability procedure. This is illustrated by its dashed outline.
After the third iteration of the loop,
the red colour becomes locked in $\Block$ and thus the first phase ends.
In the second phase, both reachability procedures continue from the paused
coloured vertices (dashed outlines); the result is seen in the fifth row.
The intersection of the two reachable sets (i.e.~the coloured set $\mathcal W$)
is then illustrated in the bottom-most picture.
The algorithm would now continue with the coloured sets
$\mathcal{V} \setminus \ccon = \{ (a, \mathit{blue}), (d, \mathit{blue}) \}$
and $\ccon \setminus \cW = \{ (c, \mathit{red}) \}$.

\subsection{Correctness and Complexity of the Coloured Lock-step Algorithm}

\begin{thm}\label{thm:correctness}
	Let $\cG = (V, C, E)$ be a~coloured graph. The coloured lock-step
	algorithm terminates and computes the coloured SCC decomposition relation~$\crscc$.
\end{thm}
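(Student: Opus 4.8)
The plan is to prove both claims---termination and correctness---by induction on the number of coloured vertices $|\mathcal{V}|$ handled by a call to $\textsc{Decomposition}$, maintaining throughout the invariant that $\mathcal{V}$ is \emph{coloured-SCC-closed}: for every colour $c \in \textsc{Colours}(\mathcal{V})$, the section $\mathcal{V}(\_,c)$ is SCC-closed in the monochromatisation $\cG(c)$. The initial call with $\mathcal{V} = V \times C$ satisfies this trivially. Under the invariant, the SCCs of the subgraph of $\cG(c)$ induced by $\mathcal{V}(\_,c)$ are exactly the SCCs of $\cG(c)$ contained in $\mathcal{V}(\_,c)$, so there is no loss in reasoning inside the induced subgraph. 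The whole argument then reduces to three facts about a single call: (i) the coloured set $\cW$ computed on line~\ref{alg:scc} satisfies $\cW(\_,c) = SCC(\cG(c), u_c)$ for the $c$-coloured pivot $u_c$ and every $c \in \textsc{Colours}(\mathcal{V})$; (ii) $\mathcal{V}$ is partitioned as $\cW \sqcup (\ccon \setminus \cW) \sqcup (\mathcal{V} \setminus \ccon)$ into coloured-SCC-closed pieces; and (iii) each recursive call is made on a strictly smaller set. Given these, $\textsc{Join}(\cW)$ contributes exactly the pairs $\{(u,c,v) \mid (u,c),(v,c) \in \cW\}$, and the induction then collects every monochromatic SCC exactly once.

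First I would nail down the loop invariants of the first \textbf{while} loop (lines~\ref{alg:ls-start}--\ref{alg:ls-end}), formalising the informal description preceding the theorem: for each colour $c$ with pivot $u_c$, every $(v,c) \in \cF$ satisfies $u_c \creach{c} v$ within $\mathcal{V}$ and every $(v,c) \in \cB$ satisfies $v \creach{c} u_c$ within $\mathcal{V}$; the sets $\Flock$ and $\Block$ are disjoint and hold precisely those colours whose forward resp.\ backward exploration has run out of frontier; and $\cFf \cup \cFu$ is the current forward frontier, split so that $\cFu$ is exactly its paused ($\Block$-coloured) part, symmetrically for $\cBf \cup \cBu$. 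The key structural consequence to extract is the standard BFS \emph{frontier-separation} property: any coloured edge leaving the already-expanded region passes through the current frontier. Finiteness of $\cG$ guarantees that forward and backward BFS terminate for each colour, so after finitely many iterations every colour lies in $\Flock \cup \Block$ and the loop exits with $\Flock \sqcup \Block = \textsc{Colours}(\mathcal{V})$; at that point $\cF$ is the \emph{complete} forward-reachable set for colours in $\Flock$ and $\cB$ the complete backward-reachable set for colours in $\Block$. Hence each section of $\ccon$ (line~\ref{alg:con}) is a full forward- or full backward-reachable set from the pivot, both SCC-closed; so $\ccon$ is coloured-SCC-closed, and therefore so are $\mathcal{V} \setminus \ccon$ (complement of an SCC-closed set inside an SCC-closed set) and, after removing the single SCC $\cW$, also $\ccon \setminus \cW$.

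The heart of the argument---and the step I expect to be the main obstacle---is claim~(i): that the second \textbf{while} loop (lines~\ref{alg:rest-start}--\ref{alg:rest-end}) repairs the exploration cut short for the ``wrong'' direction of each colour. Fix $c \in \Flock$ (the case $c \in \Block$ is symmetric); then $\cF(\_,c) = F$ is the full forward-reachable set within the induced subgraph and $\ccon(\_,c) = F$. Writing $B$ for the full backward-reachable set, the Xie--Beerel identity gives $F \cap B = SCC(\cG(c), u_c)$, and since every vertex of this SCC is forward-reachable we have $SCC(u_c) \subseteq F = \ccon(\_,c)$. Therefore backward reachability \emph{restricted to} $\ccon(\_,c)$ already recovers all of $F \cap B$, because for $v \in SCC(u_c)$ a witnessing path $v \creach{c} u_c$ can be chosen inside the SCC and hence inside $\ccon$. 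What remains is to check that restarting the backward search from the paused frontier $\cBu \cap \ccon$ (line~\ref{alg:con-apply-2}), rather than from $u_c$, loses nothing: by frontier-separation, every vertex of $SCC(u_c)$ not already in $\cB$ lies on a backward path crossing $\cBu$, and the suffix of that path stays within $SCC(u_c) \subseteq \ccon$, so it is discovered by the $\ccon$-restricted search. Combined with $\cB(\_,c) \cap F \subseteq F \cap B = SCC(u_c)$ for the part already computed in phase one, this yields $\cW(\_,c) = \cF(\_,c) \cap \cB(\_,c) = SCC(u_c)$, as required.

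Finally, termination and the closing of the induction are routine once the above is in place. Each pivot section is nonempty, so $u_c \in \cW(\_,c)$ for every colour, whence $\cW \ne \emptyset$; thus both recursive calls operate on coloured sets of strictly smaller cardinality than $|\mathcal{V}|$, which together with the base case $\mathcal{V} = \emptyset$ bounds the recursion depth and gives termination (the inner loops terminate by finiteness, by the monotone growth of $\Flock \cup \Block$, and by the monotone shrinking of the phase-two frontiers). Applying the inductive hypothesis to the coloured-SCC-closed sets $\mathcal{V} \setminus \ccon$ and $\ccon \setminus \cW$, every monochromatic SCC contained in them is eventually reported, while the pivot components reported at this level account for exactly the remaining SCCs; since the three pieces partition $\mathcal{V}$, each monochromatic SCC of every $\cG(c)$ is added to $\crscc$ exactly once via $\textsc{Join}$. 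This establishes that $\textsc{ColouredSCC}$ terminates and returns precisely the coloured SCC decomposition relation $\crscc$.
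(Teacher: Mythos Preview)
Your proposal is correct and follows essentially the same approach as the paper: both maintain the invariant that $\mathcal{V}$ is coloured-SCC-closed, fix a colour $c \in \Flock$ (w.l.o.g.), argue that the pivot's SCC is fully contained in $\cF$ after phase one, and then use the paused backward frontier $\cBu$ to show that phase two, restricted to $\ccon$, recovers the remaining backward-reachable part of the SCC; termination follows in both from the pivot set being nonempty and contained in $\cW$. Your write-up is somewhat more explicit than the paper's---you name the frontier-separation property and spell out why the recovering backward path can be chosen inside $SCC(u_c) \subseteq \ccon$, a point the paper leaves implicit---but the underlying argument is the same.
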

\begin{proof}
We first show that the set $\cW$ computed in line~\ref{alg:scc}
indeed contains one SCC for every colour $c \in \textsc{Colours}(\mathcal V)$
and that the recursive calls of \textsc{Decomposition} preserve the property
that $\mathcal V$ is SCC-closed with respect to all colours.

Let us assume that $\mathcal V$ is SCC-closed and let us take an arbitrary
$c \in \textsc{Colours}(\mathcal V)$. The function \textsc{Pivots} chooses
a~set that contains exactly one pair whose colour is $c$, let us call this
pair $(v, c)$. Let us further assume that $c$ is assigned into $\Flock$
first (the case with $\Block$ is completely symmetric).

Let us now choose an arbitrary vertex $w$ such that $v$ and $w$ are in the
same SCC of $\cG(c)$, i.e.~$v \to^* w$ and $w \to^* v$.
As the first while loop
finishes, $\cF$ contains all the pairs of the form $(u, c) \in \mathcal V$
where $u$ is reachable from $v$ in $\cG(c)$. Thus, it also contains $(w, c)$
due to the fact that $\mathcal V$ is SCC-closed.
Now, either $(w, c) \in \cB$, or there exists a vertex $x$ such that
$w \to^* x$, $x \to^* v$ in $\cG(c)$ and $x \in \cBu$.
This means that $(w, c)$ is added to $\cB$ in the second while loop.
In both cases, both $(v, c)$ and $(w, c)$ are then added to
$\cW$. As the vertex choices were arbitrary, this proves that
the SCC of $v$ in $\cG(c)$ is contained in $\cW$.
Furthermore, if $(y, c) \in \cW$ for an arbitrary $y$, then
$v \to^* y$ and $y \to^* v$ in $\cG(c)$, which means that $y$ is in
$SCC(\cG(c), v)$. This proves that $\cW$ contains exactly one
SCC for every colour $c \in \textsc{Colours}(\mathcal V)$.

We now argue that $\ccon$ is SCC-closed with respect to all colours.
This immediately implies that both $\mathcal V \setminus \ccon$ and
$\ccon \setminus \cW$ are SCC-closed.
Let us assume that there is a~colour $c \in \textsc{Colours}(\mathcal V)$
and two vertices $v$, $w$ in the same SCC of $\cG(c)$ such that
$(v, c) \in \ccon$, but $(w, c) \not\in \ccon$.
Let us assume that $c \in \Flock$ (as above, the case of $\Block$ is
completely symmetrical).
Then $(v, c) \in \cF$ after the first while loop finishes. This also means
that $(w, c) \in \cF$ as the forward reachability procedure is completed for
$c$ and thus $(w, c) \in \ccon$, a~contradiction.

What remains is to show that the algorithm terminates and that every
SCC is eventually found. Termination is trivially proved by the fact
that size of the set $\mathcal V$ always decreases in recursive calls:
both $\cW$ and $\ccon$ are non-empty because they contain the
initial pivot set as a~subset.
Clearly, a representant of every SCC of every monochromatisation $\cG(c)$ is
eventually chosen as a~pivot. Together with the above reasoning, this implies
that the algorithm is correct. 
\end{proof}

\begin{thm}\label{thm:complexity}
	Let $|V|$ be the number of vertices in the coloured graph and let
	$|C|$ be the number of colours. The coloured lock-step algorithm
	performs at most $\mathcal O(|C|\cdot |V|\cdot \log |V|)$ symbolic steps.
\end{thm}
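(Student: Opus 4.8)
The plan is to reduce the coloured bound to $|C|$ essentially independent copies of the monochromatic lock-step estimate, by charging every symbolic operation to at least one colour. First I would make explicit the per-call estimate that underlies the $\mathcal O(|V|\cdot\log|V|)$ bound for monochromatic lock-step. In one \textsc{Decomposition} call, suppose the first while loop runs for $k$ iterations, with (say) the forward search converging. Then $k\le|\con|=|W|+|\con\setminus W|$, since each pre-convergence layer contributes a fresh vertex; simultaneously $k\le|W|+|V\setminus\con|$, because the $\ge k$ vertices found by the not-yet-converged backward search all lie in $W\cup(V\setminus\con)$ (a vertex in the backward set that is also in $\con=F$ lies in $F\cap B=W$). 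Hence $k\le|W|+|S|$ where $S$ is the smaller of $\con\setminus W$ and $V\setminus\con$. The second while loop, being confined to $\ccon$, only ever discovers vertices of $W$, so it adds at most $\mathcal O(|W|)$ further iterations. Summing $|W|+|S|$ over the recursion tree, using that each vertex lies in exactly one $W$ and in $S$ at most $\log|V|$ times, recovers the monochromatic $\mathcal O(|V|\cdot\log|V|)$ bound for arbitrary pivot choices.

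The second step is the observation that, for a fixed colour $c$, the projection of the whole computation onto $c$ is literally the monochromatic lock-step run on $\cG(c)$: \textsc{Pivots} yields a single $c$-pivot, the \textsc{Post}/\textsc{Pre} updates advance the $c$-slices of $\cFf,\cBf$ exactly as breadth-first search in $\cG(c)$, the colour $c$ enters $\Flock$ or $\Block$ precisely when its forward or backward search converges, and $\ccon$, $\cW$ restrict to the monochromatic converged set and SCC. Therefore the calls in which $c$ is \emph{live} (i.e.\ $c\in\textsc{Colours}(\mathcal V)$) form a valid monochromatic recursion tree $T_c$ for $\cG(c)$; every node of $T_c$ identifies one distinct $c$-SCC, so $|T_c|\le|V|$, and at each node the number of loop iterations that actually advance $c$ equals the monochromatic count $k_c$ for that node. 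By the previous paragraph, $\sum_{t\in T_c}\bigl(k_c^{(t)}+|W_c^{(t)}|\bigr)=\mathcal O(|V|\cdot\log|V|)$.

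The final step combines the two via the inequality $\max\le\sum$. In any single call the first loop performs $\max_{c\ \mathrm{live}}k_c$ iterations (it runs until \emph{every} live colour is locked) and the second loop at most $\max_{c\ \mathrm{live}}\mathcal O(|W_c|)$ iterations, both bounded by $\sum_{c\ \mathrm{live}}\mathcal O(k_c+|W_c|)$. Summing over all calls and regrouping by colour, each colour's contribution is collected exactly along $T_c$, whence the grand total is $\sum_{c}\sum_{t\in T_c}\mathcal O\bigl(k_c^{(t)}+|W_c^{(t)}|\bigr)=\sum_{c}\mathcal O(|V|\cdot\log|V|)=\mathcal O(|C|\cdot|V|\cdot\log|V|)$. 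The naive \textsc{Pivots} routine contributes $\mathcal O(|\textsc{Colours}(\mathcal V)|)$ per call, which sums to $\sum_c|T_c|=\mathcal O(|C|\cdot|V|)$ and is absorbed.

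I expect the main obstacle to be the bookkeeping of this last regrouping. One must justify that bounding the true per-call cost $\max_c k_c$ by $\sum_{c\ \mathrm{live}}k_c$ is exactly where the multiplicative $|C|$ enters the worst case, and that the set of live-colour calls reorganises without double counting into the per-colour trees $T_c$. The crux making the $\max\le\sum$ step legitimate is verifying that every counted symbolic operation is chargeable to a \emph{live} colour — equivalently, that no call performs work on a colour that is already fully decomposed — together with the claim that the single-colour projection reproduces monochromatic lock-step for \emph{whatever} pivots \textsc{Pivots} happens to return, so that the colour-oblivious $\mathcal O(|V|\cdot\log|V|)$ bound (valid for all pivot sequences) applies to each $T_c$.
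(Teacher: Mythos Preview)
Your proposal is correct, and the underlying idea---reduce to the monochromatic $\mathcal O(|V|\log|V|)$ estimate once per colour---is the same one the paper uses, but the charging scheme is organised differently. The paper bounds each \textsc{Decomposition} call by singling out \emph{one} colour, namely the last colour to enter $\Flock\cup\Block$; it shows the call's iteration count is at most $|W_c\cup S_c|$ for that particular $c$, and then amortises by noting that a vertex lies in some such $W$ at most $|C|$ times and in some such $S$ at most $|C|\log|V|$ times. You instead bound the per-call cost by $\sum_{c\ \text{live}}(k_c+|W_c|)$ via $\max\le\sum$, interchange the order of summation, and apply the monochromatic bound along each $T_c$. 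Your route is more modular---the ``projection to $c$ is literally monochromatic lock-step'' observation makes the reduction explicit and reusable---and it handles the second while loop transparently via $\max_c|W_c|$. The paper's route is tighter at the per-call level (only one colour is charged) but relies on a somewhat terse claim that the second loop's cost is dominated by the first loop's iteration count $k$. Both arrive at the same bound, and your bookkeeping for \textsc{Pivots} ($\sum_c|T_c|\le|C|\cdot|V|$) matches the paper's.
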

\begin{proof}
	Let us first note that all the derived operations defined in
	Figure~\ref{tab:operations} use only a~constant number of the basic symbolic
	operations. As we are considering asymptotic complexity here, we can view all
	the operations in Figure~\ref{tab:operations} as elementary symbolic steps.

	We first make the observation that each vertex may be chosen as a~part of the
	pivot set at most $|C|$ times. Clearly, once a~vertex is included in the pivot
	set with a~set of colours $C'$, then, $\{v\} \times C'$ is a subset of first $\ccon$, and later $\cW$
	(due to the monotonicity of the construction of $\mathcal F$ and $\mathcal B$). Therefore, the elements of $\{v\} \times C'$ do not appear in subsequent recursive
	calls. Since a single vertex-colour pair cannot be returned by \textsc{Pivots} twice, it means that the total cumulative complexity of all the calls to the \textsc{Pivots} routine is bounded by $O(|C|\cdot|V|)$. We can therefore exclude them from the rest of the complexity analysis.

	We now consider the complexity of a~single call to \textsc{Decomposition}
	without the subsequent recursive calls. Let us now select one of the
	colours for which the lock-step reachability procedure
	(lines~\ref{alg:ls-start}--\ref{alg:ls-end}) finished last, i.e.,~one of the
	colours that have been added to $\Flock$ or $\Block$ in the final
	iteration of the loop. Let us call this colour $c$.
	Recall that $\sigma_2(c, \mathcal X)$ is the set of vertices with colour
	$c$ in a~coloured set $\mathcal X$.

	Let us denote by $W$ the monochromatic SCC discovered for $c$, i.e. $W := \sigma_2(c, \cW)$, and let
	$S$ be the smaller of $\sigma_2(c, \mathcal V \setminus \ccon)$
	and $\sigma_2(c, \ccon \setminus \cW)$.
	Clearly $S$ contains at most $|V|/2$ vertices.
	Let $k = |S \cup W|$.
	We now argue that the number of symbolic steps in a~given call
	(without the recursive calls) is bounded by $\mathcal O(k)$. This is because in a lock-step algorithm, the call to \textsc{Decomposition} must explore the discovered SCC itself (i.e. $W$), and the smaller of the forward or backward reachable sets from this SCC (i.e. $S$) -- intuitively, its complexity should be thus bounded by the size of these two sets.

	Assume w.l.o.g.~that $c \in \Flock$ (a completely symmetric argument solves
	the case $c \in \Block$).
	Then after the first while loop finishes, we have $\sigma_2(c, \ccon) = \sigma_2(c, \mathcal F)$.
	If $S$ is $\sigma_2(c, \ccon \setminus \cW)$
	then $k$ is the size of $\sigma_2(c, \mathcal F)$ (and thus also $\sigma_2(c, \ccon)$), since $\sigma_2(c, \mathcal{F})$ consists of $\sigma_2(c, \ccon \setminus \cW)$ (the set $S$) and $\sigma_2(c, \cW)$ (the discovered SCC).
	Each iteration of the first while loop puts at least one vertex with
	colour $c$ into $\mathcal F$; otherwise $c$ would not have finished in the last iteration. This means that the loop runs for at most
	$k$ iterations. This also means that the size of $\sigma_2(x, \mathcal X)$ for
	all colours $x$ and $\mathcal X \in \{ \mathcal F, \mathcal B \}$ is also
	bounded by $k$ after the first while loop finishes, which in turn means that the second while loop cannot make more
	than $O(k)$ steps.

	If $S$ is $\sigma_2(c, \mathcal V \setminus \ccon)$ instead,
	let us define $B := \sigma_2(c, \mathcal B)$ right after the first
	while loop has finished. We know that $B \subseteq S \cup W$:
	if a vertex $v$ was in $B \setminus S$, then it would have to be in $\ccon$ (i.e. $(v, c) \in \ccon$). Due to our initial assumption of $c \in \Flock$ (w.l.o.g), we then also have $(v,c) \in \mathcal
	F$ which dictates $v \in W$. Consequently, we see that any vertex $v \in B$ must be either in $S$ or in $W$, arriving at $B \subseteq S \cup W$.
	Again, each iteration of the first while loop puts at least one vertex
	with colour $c$ into $\mathcal B$; otherwise $c$ would have been
	in $\Block$ before it appeared in $\Flock$. Similarly to the previous
	case, this means that both while loops run for at most $O(k)$ steps.

	The rest of the argument uses amortised reasoning, in a~way similar to the
	proof in~\cite{bloem2000}. Note that each vertex is going to be an element of
	the set $W$ as described above at most $|C|$ times (once for each colour).
	Furthermore, each vertex is going to be an element of the set $S$ as described
	above at most $|C|\cdot\log|V|$ times: for each colour, the vertex can be an
	element of the smaller of the two sets at most $\log|V|$ times. As the cost of
	each single call can be charged to the vertices in $S \cup W$ as explained
	above, it is sufficient to charge each vertex the total cost of
	$|C| +|C|\cdot\log|V|$. Together, this means that the total number of symbolic
	steps is bounded by $O(|C|\cdot|V|\cdot\log|V|)$.
\end{proof}

Note that the upper bound established by Theorem~\ref{thm:complexity} is
no better than the one we would get if we split the coloured graph into
its monochromatic constituents and processed each
separately using the original lock-step algorithm~\cite{bloem2000}.
We remark, however, that the practical complexity of the coloured approach can be
much smaller. Indeed, the complexity analysis in the previous proof focused
on a~single colour, omitting the fact that SCCs for many other colours are
found at the same time. In cases where the edges are largely shared among
the colours, which is true in many applications, the coloured algorithm has the
potential to significantly outperform the parameter-scan approach. The
situation is similar to that of the coloured model checking; see the
observations made in~\cite{tcbb}.

\section{Symbolic Computation with Boolean Networks}
\label{section:boolean-networks}

The algorithm as presented in the previous section is completely agnostic to the properties of the underlying system, as long one provides an implementation of all the necessary symbolic operations. However, to empirically test its performance, we need to pick such an implementation, which typically entails analysis of a specific class of systems. 

In this paper, we consider Boolean networks~\cite{KAUFFMAN,Bernot05,SCHWAB,THOMAS}, specifically asynchronous Boolean networks, which represent a popular discrete modelling framework in systems biology~\cite{Brim2013, Grieb_2015}. Due to incomplete biological knowledge, the dynamics of a Boolean network can by often only partially known. This uncertainty can be then captured using coloured directed graphs. In this section, we introduce Boolean networks and show how they can be translated into coloured graphs suitable for SCC-decomposition.

Asynchronous Boolean networks are especially challenging for symbolic analysis. It is a well-known fact, that using symbolic structures (e.g BDDs) to explore very large state spaces gives good results for synchronous systems, but shows its limits when trying to tackle asynchronicity (see e.g.~\cite{DBLP:conf/forte/CouvreurT05}).

\subsection{Boolean networks with inputs} A Boolean network (BN), as the name suggests, consists of $n$ Boolean \emph{variables} $s_1, \ldots, s_n$ which together describe the state of the network. The dynamics of the network can also depend on additional $m$ Boolean \emph{inputs} $c_1, \ldots, c_m$ (sometimes also called \emph{constants}, or \emph{logical parameters}), whose value is assumed to be fixed, but generally unknown. The valuations of these inputs correspond to the colours of our Kripke structure.

Each network variable $s_i$ is equipped with a Boolean update function $b_i: \{0,1\}^n \times \{0,1\}^m \to \{0,1\}$ that updates the variable based on the state of the network, and the values of its inputs. We assume that the variables are updated \emph{asynchronously}, meaning that during every state transition, exactly one variable is updated.

Such a network with inputs defines a coloured graph where $V = \{0,1\}^n$, $C = \{0,1\}^m$, and for every $c \in C$, we have that $u \xrightarrow{c} v$ if and only if $u \not= v$ and $v = u[u_i \mapsto b_i(u, c)]$ for some $i \in [1,n]$. That is, $v$ is equal to $u$ where the $i$-th variable is updated with the output of function $b_i$. Because all variables and inputs are Boolean, this structure has a~fairly straightforward symbolic representation in terms of binary decision diagrams, as we later demonstrate.

Note that in practice, we often work within a~subset of biologically relevant colours, denoted as~$Valid$ (i.e. not every possible valuation of $c_1, \ldots, c_m$ may be biologically admissible). In the algorithms, this is implicitly reflected such that the set of all possible colours $C$ corresponds to the set $Valid$ instead of the set of \emph{all} possible valuation (i.e. $\{0,1\}^m$) if demanded by the application at hand.

\subsection{Partially specified Boolean networks} A Boolean network with inputs allows us to easily encode a wide range of biochemical systems in a~machine friendly format. However, for systems with a high degree of uncertainty, it often fails to capture this uncertainty in a~way understandable to a human reader.

To mitigate this issue, we consider \emph{partially specified} Boolean networks that allow us to explicitly mark parts of the update functions as unknown. Specifically, let us assume that $f_1^{(a_1)}$, $f_2^{(a_2)}$, $\ldots$ are symbols standing in for some uninterpreted (fixed but arbitrary) Boolean functions (here, $a_i$ denotes their arity). A partially specified Boolean network then consists of $n$ Boolean variables and $p$ uninterpreted Boolean functions. In such a~network, every update function $b'_i$ is specified as a Boolean expression that can use the function symbols $f_1, \ldots, f_p$.

This type of formalism is often easier to comprehend, as the uncertainty in dynamics is tied to the update functions instead of inputs (if desired, input can be still expressed using uninterpreted functions of arity zero). It is not immediately clear how such a~network should be represented symbolically though.

One option is to translate a partially specified network into a~BN with inputs. Any uninterpreted function $f_i^{(a)}$ can be encoded in terms of $2^a$ Boolean inputs $c_1^{i}, \ldots, c_{2^a}^i$ if we consider that input $r_j^i$ denotes the output of $f_i^{(a)}$ in the $j$-th row of its truth table. Formally, this translation can be achieved using a~repeated application of the following expansion rule:
\begin{align*}
	f(\alpha_1, \ldots, \alpha_a) \equiv (\alpha_1 \Rightarrow f'_1(\alpha_2, \ldots, \alpha_a)) \land (\neg\alpha_1 \Rightarrow f'_2(\alpha_2, \ldots, \alpha_a))
\end{align*}
Here, $f'_1$ and $f'_2$ are fresh uninterpreted functions of arity $a-1$, and $\alpha_i$ are arbitrary Boolean expressions. Using this rule, we can always convert a partially specified network to a Boolean network with inputs. The number of inputs will be exponential with respect to the arity of the employed uninterpreted functions though (since each application of the rule replaces one uninterpreted function with two, and the depth of the recursive expansion is the arity $a$).

For example, consider the following partially specified Boolean network:
\begin{align*}
	b'_1 &:= x_1 \land f_1^{(1)}(x_2)\\
	b'_2 &:= \neg x_1 \lor f_2^{(2)}(x_1, x_3)\\
	b'_3 &:= (f_3^{(0)} \Leftrightarrow x_3) \land f_2^{(2)}(\neg x_1, x_2)
\end{align*}

It uses three uninterpreted Boolean functions $f_1^{(1)}$, $f_2^{(2)}$, and $f_3^{(0)}$. After performing the aforementioned expansion, and simplifying the resulting expressions slightly for readability, we obtain the following network with logical inputs:

\begin{align*}
	b_1(x, c) =&~x_1 \land (x_2 \Rightarrow c^{1}_{[1]}) \land (\neg x_2 \Rightarrow c^{1}_{[0]})\\
	b_2(x, c) =&~\neg x_2 \lor (((x_1 \land x_3) \Rightarrow c^{2}_{[1,1]}) \land ((x_1 \land \neg x_3) \Rightarrow c^{2}_{[1,0]})\\&\hspace{15.5pt}\land~((\neg x_1 \land x_3) \Rightarrow c^{2}_{[0,1]}) \land ((\neg x_1 \land \neg x_3) \Rightarrow c^{2}_{[0,0]})) \\
	b_3(x, c) =&~(c^{3} \Leftrightarrow x_3) \land ((\neg x_1 \land x_2) \Rightarrow c^{2}_{[1,1]}) \land ((\neg x_1 \land \neg x_2) \Rightarrow c^{2}_{[1,0]})\\&\hspace{57pt}\land((x_1 \land x_2) \Rightarrow c^{2}_{[0,1]}) \land ((x_1 \land \neg x_2) \Rightarrow c^{2}_{[0,0]})
\end{align*}

Here, each $c^i_j$ corresponds to one truth table row of $f_i$, such that $j$ describes the input vector corresponding to said row (i.e. $c^{1}_{[0,1]}$ represents the value of $f_1(0, 1)$).

\subsection{Symbolic Representation of BNs}

As a symbolic representation, a natural choice are Reduced Ordered Binary Decision Diagrams (ROBDD, or simply BDD)~\cite{bryant86}, which can concisely encode Boolean functions or relations of Boolean vectors. Specifically, out implementation leverages the internal tools and libraries provided by the tool AEON~\cite{aeon}.

Since a Boolean network consists of $n$ Boolean variables and $m$ Boolean inputs, any subset of $V$, $C$, or a relation $X \subseteq V \times C$ (a coloured set of vertices) can be seen as a Boolean formula over the network variables and inputs. That is, each network variable and logical input corresponds to one decision variable of the BDD. Here, a pair $(s,c)$ belongs to such a~relation iff it represents a satisfying assignment of this formula $X$. For relations of higher arity, fresh decision variables are created for each component of the relation. Standard set operations as described in Fig.~\ref{tab:operations} then correspond to logical operations on such formulae ($\land \equiv \cap$, $\lor \equiv \cup$, etc.). 

Relation operations are similarly implementable using BDD primitives. In particular, existential quantification of a single decision variable (e.g. $\exists s_i . X$ or $\exists c_j . X$) is a native operation on BDDs. Consequently, existential quantification on relations (as well as \textsc{Colours}) is simply a quantification over all decision variables encoding the specific relation component (i.e. all network variables for $V$, or all logical inputs for $C$). Finally, \textsc{Swap} only influences the way in which a BDD is interpreted -- the actual structure of the BDD is unaffected. 

To encode the network dynamics, notice that every update function $b_i$ can be directly represented as a separate BDD. From such BDDs, we can build one large BDD describing the whole coloured transition relation, which is traditionally used for the computation of \textsc{Pre} and \textsc{Post}. But the symbolic representation of such relation is often prohibitively complex for asynchronous systems. Instead, we compute \textsc{Pre} and \textsc{Post} using partial results for individual variables, which uses more symbolic operations but is less likely to cause a~blow-up in the size of the BDD:
\begin{align*}
	\textsc{VarPost}(\cG, i, \mathcal{X}) & = (\mathcal{X} \land (b_i \centernot\Leftrightarrow s_i))[s_i \mapsto \neg s_i]\\
	\textsc{VarPre}(\cG, i, \mathcal{X}) & = \mathcal{X}[s_i \mapsto \neg s_i] \land (b_i \centernot\Leftrightarrow s_i)\\
	\textsc{Post}(\cG, \mathcal{X}) & = \bigvee_{i \in [1,n]} \textsc{VarPost}(i, \mathcal{X})\\
	\textsc{Pre}(\cG, \mathcal{X}) & = \bigvee_{i \in [1,n]} \textsc{VarPre}(i, \mathcal{X})	
\end{align*}
Here, $[s_i \mapsto \neg s_i]$ is the standard substitution operation, which we use to flip the value of variable $s_i$ in the resulting formula if it does not agree with the output of $b_i$. Note that this operation can be also implemented structurally directly on the BDD by exchanging the children of decision nodes conditioning on $s_i$. Also note that sub-formulae that do not depend on $X$ can be pre-computed once for the whole run of the algorithm, and the version of $\textsc{Pre}$ and $\textsc{Post}$ for monochromatic graphs can be implemented in exactly the same way.

\section{Implementation}
\label{section:implementation}

Finally, let us discuss a number of technical improvements which our algorithm employs in practice, and whose impact we consider in the evaluation section.

\subsection{Pivot Selection}

In Algorithm~\ref{algo:symbolic}, we gave a naive implementation of the $\textsc{Pivots}(\mathcal{X})$ function. Here, we show how to implement it for BDDs in a much more concise way. Note that our approach uses the notation we established earlier for Boolean networks, but is generally applicable to any set or relation of bit-vectors represented using BDDs. 

First, notice that for a single network variable, we can define a similar operation, which we call $\textsc{Pick}(i, \mathcal{X})$:
\begin{equation*}
	\textsc{Pick}(i, \mathcal{X}) = \mathcal{X} \setminus (\mathcal{X} \land \neg s_i)[s_i \gets \neg s_i]
\end{equation*}

Here, we first restrict $\mathcal{X}$ only to the valuations which have $s_i = \mathit{false}$, and then invert the value of $s_i$ (resulting in $s_i$ being always $\mathit{true}$ in the set). Once we subtract these valuations from $\mathcal{X}$, the resulting set then contains a valuation with $s_i = \mathit{true}$ only if it does \emph{not} contain the same valuation with $s_i = \mathit{false}$. Intuitively, for any valuation of the remaining BDD decision variables (i.e. $s_j$ and $c_j$ in our case) that is in $\mathcal{X}$, we just picked a single unique value of $s_i$ (while preferring the value $s_i = \mathit{false}$). 

However, observe that we cannot simply apply $\textsc{Pick}$ to every network variable alone to obtain the result of \textsc{Pivots}. Intuitively, the problem lies in the fact that \textsc{Pick} selects a witness for each variable in isolation, while $\textsc{Pivots}$ considers all network variables as interconnected. We resolve this problem using a different equation, one which eliminates the picked variable in the recursive invocation:
\begin{align*}
	\textsc{Pivots}(\mathcal{X}) &= \textsc{F}(\mathcal{X}, s_1, \ldots, s_n)\\
	\textsc{F}(\mathcal{X}, s_1) &= \textsc{Pick}(1, \mathcal{X})\\
	\textsc{F}(\mathcal{X}, s_1, \ldots, s_k) &= \textsc{Pick}(k, \mathcal{X}) \cap \textsc{F}(\exists s_k. \mathcal{X}, s_1, \ldots, s_{k-1})
\end{align*}

In this equation, the final case $\textsc{F}(\mathcal{X}, s_1)$ is clearly correct, since it simply defers to $\textsc{Pick}(i, \mathcal{X})$. However, to understand why the recursive case $\textsc{F}(\mathcal{X}, s_1, \ldots, s_k)$ is correct, observe the following: Assume that the set $\mathcal{Y} = \textsc{F}(\exists s_k. \mathcal{X}, s_1, \ldots, s_{k-1})$ is computed correctly. That is, for any valuation of the remaining variables, $\mathcal{Y}$ contains a single unique \emph{incomplete witness} valuation of variables $s_1, \ldots, s_{k-1}$. Now, since the BDD representing $\exists s_k . \mathcal{X}$ does not depend on $s_k$, each such unique \emph{witness} must be included in $\mathcal{Y}$ twice: once with $s_k = \mathit{true}$ and once with $s_k = \mathit{false}$. In other words, a single \emph{witness} valuation of $s_1, \ldots, s_{k-1}$ must be tied to two different valuations of the remaining variables, and these valuations are differentiated only by the variable $s_k$.

Now, one of these two valuations is necessarily included in the set $\textsc{Pick}(k, \mathcal{X})$. The other is either missing from $\mathcal{X}$ altogether, or is eliminated by $\textsc{Pick}(k, \mathcal{X})$. As such, computing $\textsc{Pick}(k, \mathcal{X}) \cap \mathcal{Y}$ extends the witness from $s_1, \ldots, s_{k-1}$ to $s_1, \ldots, s_{k}$ by eliminating one of the two aforementioned occurrences of the \emph{incomplete witness}.

Observe that, as opposed to the original naive implementation of \textsc{Pivots}, this implementation only requires $\mathcal{O}(n)$ (i.e. $\mathcal{O}(\log |V|)$) symbolic operations in any case.

\subsection{Saturation}

In~\cite{Ciardo06}, and later in greater detail within~\cite{Ciardo11}, Ciardo et al. show that when the system is asynchronous, it may be much easier to compute reachable sets (and consequently SCCs) by applying only one transition (e.g. denoted $t_1$) at a time. Once applying $t_1$ cannot add new states to the reachable set, another transition (e.g denoted $t_2$) can be considered, respecting the order in which the affected variables appear in the symbolic data structure (Ciardo et al. employ multivalued decision diagrams, but the principle also applies to BDDs). If the application of other transitions causes that we can again add new states using $t_1$, the process starts anew and $t_1$ is ``saturated'' again.

In the comparison presented in~\cite{Ciardo11}, only the Xie-Beerel $\mathcal{O}(|V|^2)$ algorithm is used with saturation enabled, while the lock-step algorithm is used as given in~\cite{bloem2000}. However, we argue that saturation can be also beneficial in the lock-step algorithm. 

\paragraph{Asymptotic complexity} Unfortunately, combining lock-step with saturation disrupts the $\mathcal{O}(|V| \cdot \log|V|)$ asymptotic complexity of the algorithm. To see why this is the case, observe that classical symbolic reachability (i.e. a fixed-point algorithm iterating the $\textsc{Post}$ procedure) requires $\mathcal{O}(|V|)$ steps to explore a graph. Meanwhile, a reachability procedure employing saturation needs $\mathcal{O}(|V||T|)$ operations, where $|T|$ is the number of distinct transitions. 

This is caused by the fact that saturation needs to check up to $|T|$ transitions to discover a vertex. For example, consider an asynchronous graph employing transitions $t_1, \ldots, t_n$ such that $t_1$ and $t_n$ are alternated on a path of length $\mathcal{O}(|V|)$. Between considering $t_1$ and $t_n$, saturation will attempt each of the $|T|$ transitions, which are useless on this path, but still consume a symbolic operation.

Consequently, this $|T|$ factor trickles down into the complexity of both the Xie-Beerel and lock-step algorithms if saturation is used, as both ultimately rely on some form of reachability to discover the graph vertices. The complexity of the coloured algorithms is then similarly affected.

\paragraph{Saturation and lock-step} The main idea of how saturation is applied in a coloured lock-step algorithm (for Boolean networks) is shown in Algorithm~\ref{algo:saturation}. The algorithm presents a helper function which performs \emph{one reachability step}, similar to what is performed by the $\textsc{Post}$ function. However, in this algorithm, only one transition is fired for each colour (we assume the iteration follows the order of variables as they appear in the symbolic representation, which benefits saturation). Additionally, a set $R$ of colours that could not perform a step is computed. A similar procedure can be considered for backwards reachability, simply replacing $\textsc{VarPost}$ with $\textsc{VarPre}$. 

Note that there is a slight discrepancy between Algorithm~\ref{algo:saturation} and the intuitive description of saturation that we gave earlier. In particular, we see that during a \textsc{NextStep} operation, a transition for each variable is triggered at most once, as opposed to the original description, where a transitions are fired repeatedly. This is caused by the simple nature of Boolean networks: In a BN, a single transition always modifies a single Boolean variable. Consequently, no new states can be discovered by firing a single transition multiple times in sequence. For other asynchronous systems, $\textsc{VarPost}$ may need to be modify to apply the corresponding transition repeatedly.

Additionally, note that we use the set $R$ to ensure that $\textsc{VarPost}$ (i.e. firing of a single transition) is executed only for colours for which we have not found a successor yet using some of the previously considered transitions. This is necessary to ensure that in each invocation of \textsc{NextStep}, each colour present in $\mathcal{F}$ is either advanced by one step (using exactly one transition), or is reported as converged within the returned set $R$.

Using this process, we can replace the $\textsc{Pre}/\textsc{Post}$ procedures in the main lock-step algorithm (lines 11 and 12 of Algorithm~\ref{algo:symbolic}). The $R$ sets computed here are then used to update $F_{lock}$ and $B_{lock}$ (lines 13 and 14), as they exactly represent the converged colours that do not need further computation. A similar modification is necessary for the second while loop (lines 25-29), but here the sets of remaining colours $R$ are not needed.

\begin{algorithm}
	\SetKwProg{Fn}{Function}{}{}
	\Fn{\textsc{NextStep}$(\mathfrak{G}, \mathcal{F})$}{
		$R \gets \textsc{Colours}(\mathcal{F})$\;
		\For{$\var{A} \in \vars$}{
			$\mathcal{S} \gets \textsc{VarPost}(\mathfrak{G}, \var{A}, (\mathcal{F} \cap V) \times R)$\;
			$R \gets R \setminus \textsc{Colours}(\mathcal{S})$\;
			$\mathcal{F} \gets \mathcal{F} \cup \mathcal{S}$\;
			\If{$R = \emptyset$}{\textbf{break}\;}
		}		
		\Return $\mathcal (\mathcal{F}, R)$\;
	}
	\caption{Main idea of the lock-step-saturation approach. The algorithm extends $\mathcal{F}$ with one additional reachability step, and returns a set of colours locked in this iteration ($R$).}\label{algo:saturation}
\end{algorithm}

\subsection{Trimming and Parallelism}

Most graphs typically contain a large number of trivial SCCs that introduce unnecessary overhead to the main algorithm. To avoid this overhead, we additionally perform a trimming step before each invocation of \textsc{Decomposition}. Trimming consists of repeatedly removing all vertices which have no outgoing or no incoming edges and is employed by most symbolic SCC algorithms on standard directed graphs as well. 

The coloured analogue of trimming is straightforward, as it can be achieved using \textsc{Pre} and \textsc{Post} operations just as in the non-coloured case. For a coloured set of vertices $\mathcal{V}$, operation $\textsc{Post}(\cG, \textsc{Pre}(\cG, \mathcal{V}) \cap \mathcal{V}) \cap \mathcal{V}$ returns only the vertices which have at least one predecessor in $\mathcal{V}$. The successor variant simply exchanges the \textsc{Post} and \textsc{Pre} operations. 

As such, applying this operation to each $\mathcal{V}$ until a fixed-point is reached before \textsc{Decomposition} is invoked eliminates the undesired trivial SCCs. Since the total number of steps performed collectively by all such fixed-point computations is bounded by $|C||V|$ (the total number of removable vertex-colour pairs), this does not impact the overall asymptotic complexity of the algorithm.

In some cases, we have observed that the symbolic representation is able to handle the SCC computation but explodes during trimming. The algorithm then times-out during trimming, even though useful information about SCCs could be obtained if the trimming was skipped or postponed. To avoid this issue, we enforce an extra condition that a trimming procedure is terminated prematurely if the computed BDDs are more than twice the size (in terms of BDD decision nodes) of the initial set. 

Additionally, the lock-step algorithm can be rather trivially parallelised. The recursive \textsc{Decomposition} calls operate on independent coloured vertex sets and can be therefore deferred to separate threads. Since the body of the \textsc{Decomposition} method is rather complex, this can be done easily with a queue guarded by a mutex which is shared between all threads (i.e. the synchronisation overhead is negligible due to the long running time of \textsc{Decomposition}). Finally, a simple termination detection procedure is needed to ensure that idle threads do not terminate prematurely while decomposition is still running.

Note that most BDD packages are not internally thread-safe, as they share decision node memory across different BDD objects. In our experiments, this aspect is handled by cloning the set $\mathcal{V}$ corresponding to each recursive invocation, plus the symbolic representation of the BN necessary to compute $\textsc{Post}$ and $\textsc{Pre}$. As such, the memory used to represent BDDs manipulated by each thread is completely independent from other threads.

\section{Experimental Evaluation}

To test the algorithm, we compiled a benchmark set of Boolean networks from the CellCollective~\cite{helikar2012cell} and GINsim~\cite{chaouiya2012} model databases. Since the models in these databases contain fully specified networks, uninterpreted functions were introduced into existing models by pseudo-randomly erasing parts of the existing update functions. 

While this process is to some extent artificial, we believe it to be a good approximation of the model development process, where at some point, the structure of the network is already established, but its dynamics are still not fully determined. Using this process, we obtained a collection of networks ranging between $2^{20}$ and $2^{50}$ in the size of the coloured graph (i.e. $|V \times C|$). Note that for each graph, we consider only a subset of possible input valuations that is biologically relevant with respect to the established network structure. For example, the first model (i.e.~\cite{sanchez2017modeling}) admits $2^{48}$ input valuations, but only $2^{19}$ are biologically relevant due to constraints on function monotonicity. 

A complete overview of the employed models is given in Table~\ref{tab:models}. For each model, we give the number of discovered non-trivial components as an interval, because each colour can correspond to a different number of components. We employ a 24h timeout for all experiments.

\begin{table}
	\caption{The considered benchmark models. Here, $n$ is the number of BN variables, $m$ is the number of logical inputs (after expansion of uninterpreted functions), $|C|$ is the number of all biologically relevant colours (input valuations), and $|V \times C|$ is the size of the whole biologically relevant coloured state space. Finally, \#SCC gives the number of detected non-trivial SCCs. Note that this number varies depending on input valuation, and is thus given as a range.}
	\label{tab:models}
	\setlength\tabcolsep{8 pt}
	\renewcommand{\arraystretch}{1.5}
	\begin{tabular}{ | c | c | c | c | c | c | }
		\hline
		\textbf{Model name} & $n$ & $m$ & $|C|$ & $|V \times C|$ & \#SCC \\\hline
		{\small Asymmetric Cell Division~\cite{sanchez2017modeling}} & $5$ & $48$ & $\sim2^{19}$ & $\sim2^{24}$ & 1-13 \\ \hline
		
		{\small Reduced TCR Signalisation~\cite{klamt2006methodology}} & $10$ & $46$ & $\sim2^{14}$ & $\sim2^{24}$ & 36-115 \\ \hline
		
		{\small Budding Yeast (Orlando)~\cite{orlando2008global}} & $9$ & $54$ & $\sim2^{16}$ & $\sim2^{27}$ & 1-16 \\ \hline
		
		{\small Budding Yeast (Irons)~\cite{irons2009logical}} & $18$ & $44$ & $\sim2^{17}$ & $\sim2^{35}$ & 2-5568 \\ \hline
		
		{\small Tumor Cell Migration~\cite{cohen2015mathematical}} & $20$ & $44$ & $\sim2^{15}$ & $\sim2^{35}$ & 436-379308 \\ \hline
		
		{\small T-cell Differentiation~\cite{mendoza2006method}} & $23$ & $40$ & $\sim2^{15}$ & $\sim2^{38}$ & 41728-43264 \\ \hline
		
		{\small WG Signalling Pathway~\cite{mbodj2013logical}} & $26$ & $38$ & $\sim2^{22}$ & $\sim2^{48}$ & 0 \\ \hline
		
		{\small Full TCR Signalisation~\cite{klamt2006methodology}} & $30$ & $48$ & $\sim2^{17}$ & $\sim2^{47}$ &  48-1087 \\ \hline
	\end{tabular}
\end{table}

\begin{table}
	\caption{Overview of runtime for different version of the SCC detection algorithm. The times (\texttt{hours:minutes:seconds}) refer to the total runtime of the SCC decomposition procedure for the basic lock-step, lock-step with saturation, and lock-step with saturation and parallelism, with \texttt{DNF} representing a time-out after 24-hours. }\label{tab:results}
	\centering
	\setlength\tabcolsep{8 pt}
	\renewcommand{\arraystretch}{1.5}
	\begin{tabular} { | c | c | c | c | }
		\hline
		\textbf{Model Name} & \textbf{Parallel} & \textbf{Satur.} & \textbf{Lock-step}  \\ \hline
		
		{\small Asymmetric Cell Division~\cite{sanchez2017modeling}} & \texttt{00:05} & \texttt{00:10} & \texttt{00:15} \\ \hline
		
		{\small Reduced TCR Signalisation~\cite{klamt2006methodology}} & \texttt{00:04} & \texttt{00:45} & \texttt{01:12} \\ \hline
		
		{\small Budding Yeast (Orlando)~\cite{orlando2008global}} & \texttt{06:29} & \texttt{06:50} & \texttt{11:21} \\ \hline
		
		{\small Budding Yeast (Irons)~\cite{irons2009logical}} & \texttt{15:14} & \texttt{2:53:16} & \texttt{3:28:44} \\ \hline
		
		{\small Tumor Cell Migration~\cite{cohen2015mathematical}} & \texttt{40:10} & \texttt{18:34:16} & \texttt{DNF} \\ \hline
		
		{\small T-cell Differentiation~\cite{mendoza2006method}} & \texttt{16:10:41} & \texttt{DNF} & \texttt{DNF} \\ \hline
		
		{\small WG Signalling Pathway~\cite{mbodj2013logical}} & \texttt{1:18:38} & \texttt{1:23:37} & \texttt{1:42:12} \\ \hline
		
		{\small Full TCR Signalisation~\cite{klamt2006methodology}} & \texttt{4:49:04} & \texttt{DNF} & \texttt{DNF} \\ \hline
		
	\end{tabular}
\end{table}

The experiments were performed on a 32-core AMD Threadripper workstation with 64GB of RAM memory. All tested models are available in our source code repository.\footnotemark[3] \footnotetext[3]{\url{https://github.com/sybila/biodivine-lib-param-bn/tree/lmcs}} Note that the smaller models ($<2^{30}$) should be easy to process even on a less powerful machine; however, the larger models can require substantial amount of memory.

For each model, we have tested the lock-step algorithm as presented in the main part of this paper (\emph{Lock-step} in Table~\ref{tab:results}), an enhanced version with saturation enabled (\emph{Satur.} in Table~\ref{tab:results}), and a parallel implementation which also includes saturation (\emph{Parallel} in Table~\ref{tab:results}). In all algorithms, we employ the trimming optimisation.

From the results, we can see that parallelisation improves the performance of the algorithm significantly: in case of models with a large number of SCCs, we see an up-to 30x speed-up, comparing \emph{Parallel} and \emph{Satur.} in Table~\ref{tab:results}. On the other hand, when the number of SCCs is small (such as~\cite{orlando2008global}), the speed-up is understandably minimal, since the number of independent recursive calls is also small.

As expected, the total number of SCCs has a significant impact on the performance of the algorithm (e.g.~\cite{irons2009logical} and~\cite{cohen2015mathematical}) overall, since the number of calls to \textsc{Decomposition} increases. Furthermore, we see that our ``coloured saturation'' indeed provides a performance benefit. However, this improvement is mostly incremental.

After further analysis, we discovered that the whole algorithm is often limited by the performance of the trimming procedure, rather than reachability procedures though. In~particular, the use of saturation has significantly reduced the size of symbolic representation during computation of reachability, however the symbolic representation still performs rather poorly (at least for Boolean networks) during trimming. This limits the performance of the whole method, since all the considered graphs contain a large portion of trivial SCCs. Furthermore, in many cases the number of iterations needed to completely trim a set of states is substantial. This leads us to believe there is still space for improvement in terms of SCC detection in large Boolean networks, even without parameters.

Finally, we examined the benefit of processing all colours simultaneously versus a naive parameter scan approach, where each monochromatic case is handled separately. To do so, we considered various pseudo-random monochromatisations of the studied models and processed these using our algorithm. Here, we observe that for the four models with at least $20$ variables, no computation for any of the monochromatic models finished in under one second (with T-cell differentiation typically requiring more than one minute due to the relatively large number of components). 

Consequently, we can extrapolate that computing the full coloured SCC decomposition using such naive parameter scan would require more than 10 ours for each model (and $10+$ days in the case of T-cell differentiation). This approach could be to some extent beneficial in a massively parallel environment (hundreds or thousands of CPUs), but the coloured approach clearly scales better in setups where resources are more limited.

\section{Conclusions}

This paper presents a fully symbolic algorithm for detecting all monochromatic strongly connected components in edge-coloured graphs. The work has been motivated by systems sciences, namely systems biology, where the need for efficient automated analysis of components in large graphs with a large sets of coloured edges is emerging. The algorithm combines several ideas inspired by existing state-of-the-art algorithms for SCC decomposition in a~non-trivial way. We believe this is the first fully symbolic algorithm aiming to solve the problem efficiently.

The experimental evaluation has shown that the algorithm can handle large, real-world systems that would be otherwise too large to fit into the memory of a conventional workstation ($>2^{32}$), and that the performance of the algorithm can be further improved using saturation and parallelisation. Finally, the algorithm has a strong potential to be significantly faster
than iterating a standard algorithm for SCC decomposition executed on all monochromatic sub-graphs one-by-one.

\bibliography{bib-tacas2021}

\newcommand{\etalchar}[1]{$^{#1}$}
\begin{thebibliography}{SOHMMA17}

\bibitem[AA07]{Akbari}
S.~Akbari and A.~Alipour.
\newblock Multicolored trees in complete graphs.
\newblock {\em Journal of Graph Theory}, 54(3):221--232, 2007.

\bibitem[ADF{\etalchar{+}}08]{Das}
A.~Abouelaoualim, K.~Ch. Das, L.~Faria, Y.~Manoussakis, C.~Martinhon, and
  R.~Saad.
\newblock Paths and trails in edge-colored graphs.
\newblock In {\em LATIN 2008: Theoretical Informatics}, pages 723--735.
  Springer, 2008.

\bibitem[AG97]{Alon}
N.~Alon and G.~Gutin.
\newblock Properly colored hamilton cycles in edge-colored complete graphs.
\newblock {\em Random Structures \& Algorithms}, 11(2):179--186, 1997.

\bibitem[BBB{\etalchar{+}}17]{cmsb2017}
Ji{\v{r}}{\'\i} Barnat, Nikola Bene{\v{s}}, Lubo{\v{s}} Brim, Martin Demko,
  Matej Hajnal, Samuel Pastva, and David {\v{S}}afr{\'a}nek.
\newblock Detecting attractors in biological models with uncertain parameters.
\newblock In {\em Computational Methods in Systems Biology (CMSB 2017)}, volume
  10545 of {\em Lecture Notes in Computer Science}, pages 40--56. Springer,
  2017.

\bibitem[BBBv11]{BarnatBBC11}
Ji\v{r}\'{\i} Barnat, Petr Bauch, Lubo\v{s} Brim, and Milan \v{C}e\v{s}ka.
\newblock Computing strongly connected components in parallel on {CUDA}.
\newblock In {\em 25th {IEEE} International Symposium on Parallel and
  Distributed Processing, {IPDPS} 2011 - Conference Proceedings}, pages
  544--555. {IEEE}, 2011.

\bibitem[BBK{\etalchar{+}}12]{tcbb}
J.~{Barnat}, L.~{Brim}, A.~{Krejci}, A.~{Streck}, D.~{Safranek}, M.~{Vejnar},
  and T.~{Vejpustek}.
\newblock On parameter synthesis by parallel model checking.
\newblock {\em IEEE/ACM Transactions on Computational Biology and
  Bioinformatics}, 9(3):693--705, 2012.

\bibitem[BBP{\etalchar{+}}19]{ICFEM19}
Nikola Bene\v{s}, Lubo\v{s} Brim, Samuel Pastva, Jakub Pol{\'{a}}\v{c}ek, and
  David \v{S}afr{\'{a}}nek.
\newblock Formal analysis of qualitative long-term behaviour in parametrised
  boolean networks.
\newblock In {\em Formal Methods and Software Engineering (ICFEM 2019)}, volume
  11852 of {\em Lecture Notes in Computer Science}, pages 353--369. Springer,
  2019.

\bibitem[BBP{\v{S}}20]{aeon}
Nikola Bene{\v{s}}, Lubo{\v{s}} Brim, Samuel Pastva, and David
  {\v{S}}afr{\'a}nek.
\newblock {AEON:} attractor bifurcation analysis of parametrised boolean
  networks.
\newblock In {\em Computer Aided Verification - 32nd International Conference,
  {CAV} 2020}, volume 12224 of {\em Lecture Notes in Computer Science}, Cham,
  2020. Springer International Publishing.

\bibitem[BBP{\v{S}}21]{tacas21}
Nikola Bene{\v{s}}, Lubo{\v{s}} Brim, Samuel Pastva, and David
  {\v{S}}afr{\'a}nek.
\newblock Symbolic coloured {SCC} decomposition.
\newblock In {\em International Conference on Tools and Algorithms for the
  Construction and Analysis of Systems}, pages 64--83. Springer, 2021.

\bibitem[BCLF79]{bookgraphs79}
Mehdi Behzad, Gary Chartrand, and Linda Lesniak-Foster.
\newblock {\em Graphs and Digraphs}.
\newblock Wadsworth Publishing, 1979.

\bibitem[BCM{\etalchar{+}}92]{BurchCMDH92}
Jerry~R. Burch, Edmund~M. Clarke, Kenneth~L. McMillan, David~L. Dill, and L.~J.
  Hwang.
\newblock Symbolic model checking: 10{\^{}}20 states and beyond.
\newblock {\em Inf. Comput.}, 98(2):142--170, 1992.

\bibitem[B{\v{C}}{\v{S}}13]{Brim2013}
Lubo{\v{s}} Brim, Milan {\v{C}}e{\v{s}}ka, and David {\v{S}}afr{\'a}nek.
\newblock Model checking of biological systems.
\newblock In {\em Formal Methods for Dynamical Systems}, pages 63--112.
  Springer Berlin Heidelberg, 2013.

\bibitem[BCVDP11]{Barnat09}
Ji\v{r}\'{\i} Barnat, Jakub Chaloupka, and Jaco Van De~Pol.
\newblock Distributed algorithms for {SCC} decomposition.
\newblock {\em J. Log. and Comput.}, 21(1):23--44, 2011.

\bibitem[BGS00]{bloem2000}
Roderick Bloem, Harold~N. Gabow, and Fabio Somenzi.
\newblock An algorithm for strongly connected component analysis in n log n
  symbolic steps.
\newblock In {\em Formal Methods in Computer-Aided Design (FMCAD 2000)},
  Lecture Notes in Computer Science, pages 37--54. Springer-Verlag, 2000.

\bibitem[BJG97]{BANGJENSEN1997}
Joergen Bang-Jensen and Gregory Gutin.
\newblock Alternating cycles and paths in edge-coloured multigraphs: A survey.
\newblock {\em Discrete Mathematics}, 165-166:39 -- 60, 1997.

\bibitem[BLvdP16]{Bloemen16}
Vincent Bloemen, Alfons Laarman, and Jaco van~de Pol.
\newblock Multi-core on-the-fly {SCC} decomposition.
\newblock In {\em Proceedings of the 21st ACM SIGPLAN Symposium on Principles
  and Practice of Parallel Programming}, PPoPP '16, New York, NY, USA, 2016.
  ACM.

\bibitem[BPC{\etalchar{+}}10]{BattPCGMJ10}
Gr{\'{e}}gory Batt, Michel Page, Irene Cantone, Gregor Goessler, Pedro~T.
  Monteiro, and Hidde de~Jong.
\newblock Efficient parameter search for qualitative models of regulatory
  networks using symbolic model checking.
\newblock {\em Bioinformatics}, 26(18), 2010.

\bibitem[Bry86]{bryant86}
R.~E. Bryant.
\newblock Graph-based algorithms for boolean function manipulation.
\newblock {\em IEEE Trans. Comput.}, 35(8):677--691, 1986.

\bibitem[CC16]{choo2016efficient}
Sang-Mok Choo and Kwang-Hyun Cho.
\newblock An efficient algorithm for identifying primary phenotype attractors
  of a large-scale boolean network.
\newblock {\em BMC Systems Biology}, 10(1):95, 2016.

\bibitem[CDHL18]{chatterjee2018}
Krishnendu Chatterjee, Wolfgang Dvo{\v{r}}{\'a}k, Monika Henzinger, and
  Veronika Loitzenbauer.
\newblock Lower bounds for symbolic computation on graphs: Strongly connected
  components, liveness, safety, and diameter.
\newblock In {\em Proceedings of the Twenty-Ninth Annual ACM-SIAM Symposium on
  Discrete Algorithms (SODA 2018)}, pages 2341--2356. SIAM, 2018.

\bibitem[CHS{\etalchar{+}}10]{classen2010model}
Andreas Classen, Patrick Heymans, Pierre-Yves Schobbens, Axel Legay, and
  Jean-Fran{\c{c}}ois Raskin.
\newblock Model checking lots of systems: efficient verification of temporal
  properties in software product lines.
\newblock In {\em Proceedings of the 32nd ACM/IEEE International Conference on
  Software Engineering-Volume 1}, pages 335--344, 2010.

\bibitem[CMR{\etalchar{+}}15]{cohen2015mathematical}
David~PA Cohen, Loredana Martignetti, Sylvie Robine, Emmanuel Barillot, Andrei
  Zinovyev, and Laurence Calzone.
\newblock Mathematical modelling of molecular pathways enabling tumour cell
  invasion and migration.
\newblock {\em PLoS computational biology}, 11(11):e1004571, 2015.

\bibitem[CMS06]{Ciardo06}
Gianfranco Ciardo, Robert~M. Marmorstein, and Radu Siminiceanu.
\newblock The saturation algorithm for symbolic state-space exploration.
\newblock {\em Int. J. Softw. Tools Technol. Transf.}, 8(1):4--25, 2006.

\bibitem[CNT12]{chaouiya2012}
Claudine Chaouiya, Aurelien Naldi, and Denis Thieffry.
\newblock Logical modelling of gene regulatory networks with ginsim.
\newblock In {\em Bacterial Molecular Networks}, pages 463--479. Springer,
  2012.

\bibitem[CT05]{DBLP:conf/forte/CouvreurT05}
Jean{-}Michel Couvreur and Yann Thierry{-}Mieg.
\newblock Hierarchical decision diagrams to exploit model structure.
\newblock In {\em FORTE 2005}, volume 3731 of {\em Lecture Notes in Computer
  Science}, pages 443--457. Springer, 2005.
\newblock \href {http://dx.doi.org/10.1007/11562436\_32}
  {\path{doi:10.1007/11562436\_32}}.

\bibitem[DAERR16]{deritei2016principles}
D{\'a}vid Deritei, William~C Aird, M{\'a}ria Ercsey-Ravasz, and
  Erzs{\'e}bet~Ravasz Regan.
\newblock Principles of dynamical modularity in biological regulatory networks.
\newblock {\em Nature Scientific Reports}, 6:21957, 2016.

\bibitem[Dor94]{DORNINGER94}
Dietmar Dorninger.
\newblock Hamiltonian circuits determining the order of chromosomes.
\newblock {\em Discrete Applied Mathematics}, 50(2):159 -- 168, 1994.

\bibitem[FHP00]{FWBW}
Lisa~K. Fleischer, Bruce Hendrickson, and Ali P{\i}nar.
\newblock On identifying strongly connected components in parallel.
\newblock In {\em Parallel and Distributed Processing}, volume 1800 of {\em
  Lecture Notes in Computer Science}, pages 505--511. Springer, 2000.

\bibitem[GBS{\etalchar{+}}15]{Grieb_2015}
Melanie Grieb, Andre Burkovski, J.~Eric Sträng, Johann~M. Kraus, Alexander
  Groß, Günther Palm, Michael Kühl, and Hans~A. Kestler.
\newblock Predicting variabilities in cardiac gene expression with a boolean
  network incorporating uncertainty.
\newblock {\em PLOS ONE}, 10(7):1--15, 07 2015.

\bibitem[GGG{\etalchar{+}}17]{GiacobbeGGHPP17}
Mirco Giacobbe, Calin~C. Guet, Ashutosh Gupta, Thomas~A. Henzinger, Tiago
  Paix{\~{a}}o, and Tatjana Petrov.
\newblock Model checking the evolution of gene regulatory networks.
\newblock {\em Acta Informatica}, 54(8):765--787, 2017.

\bibitem[GPP03]{gentilini2003}
Raffaella Gentilini, Carla Piazza, and Alberto Policriti.
\newblock Computing strongly connected components in a linear number of
  symbolic steps.
\newblock In {\em Proceedings of the Twenty-Ninth Annual ACM-SIAM Symposium on
  Discrete Algorithms (SODA 2003)}, volume~3, pages 573--582. SIAM, 2003.

\bibitem[GPP08]{GentiliniPP08}
Raffaella Gentilini, Carla Piazza, and Alberto Policriti.
\newblock Symbolic graphs: Linear solutions to connectivity related problems.
\newblock {\em Algorithmica}, 50(1):120--158, 2008.

\bibitem[HKM{\etalchar{+}}12]{helikar2012cell}
Tom{\'a}{\v{s}} Helikar, Bryan Kowal, Sean McClenathan, Mitchell Bruckner,
  Thaine Rowley, Alex Madrahimov, Ben Wicks, Manish Shrestha, Kahani Limbu, and
  Jim~A Rogers.
\newblock The cell collective: toward an open and collaborative approach to
  systems biology.
\newblock {\em BMC systems biology}, 6(1):1--14, 2012.

\bibitem[HRO13]{HRO13}
Sungpack Hong, Nicole~C. Rodia, and Kunle Olukotun.
\newblock {On fast parallel detection of strongly connected components ({SCC})
  in small-world graphs}.
\newblock In {\em Proceedings of the International Conference on High
  Performance Computing, Networking, Storage and Analysis}, SC 2013, New York,
  NY, USA, 2013. ACM.

\bibitem[Iro09]{irons2009logical}
DJ~Irons.
\newblock Logical analysis of the budding yeast cell cycle.
\newblock {\em Journal of theoretical biology}, 257(4):543--559, 2009.

\bibitem[Jia93]{JIANG1993}
Bin Jiang.
\newblock {I/O- and CPU-optimal recognition of strongly connected components}.
\newblock {\em Information Processing Letters}, 45(3):111 -- 115, 1993.

\bibitem[Kau69]{KAUFFMAN}
S.A. Kauffman.
\newblock Metabolic stability and epigenesis in randomly constructed genetic
  nets.
\newblock {\em Journal of Theoretical Biology}, 22(3):437--467, 1969.

\bibitem[Kir14]{Kiraly14}
Zolt\'an Kir\'aly.
\newblock Monochromatic components in edge-colored complete uniform
  hypergraphs.
\newblock {\em European Journal of Combinatorics}, 35:374 -- 376, 2014.

\bibitem[KL08]{Kano}
Mikio Kano and Xueliang Li.
\newblock Monochromatic and heterochromatic subgraphs in edge-colored graphs -
  a survey.
\newblock {\em Graphs and Combinatorics}, 24(4):237--263, 2008.

\bibitem[KSRL{\etalchar{+}}06]{klamt2006methodology}
Steffen Klamt, Julio Saez-Rodriguez, Jonathan~A Lindquist, Luca Simeoni, and
  Ernst~D Gilles.
\newblock A methodology for the structural and functional analysis of signaling
  and regulatory networks.
\newblock {\em BMC bioinformatics}, 7(1):56, 2006.

\bibitem[LWA{\etalchar{+}}16]{Li16}
Qin Li, Anders Wennborg, Erik Aurell, Erez Dekel, Jie-Zhi Zou, Yuting Xu, Sui
  Huang, and Ingemar Ernberg.
\newblock Dynamics inside the cancer cell attractor reveal cell heterogeneity,
  limits of stability, and escape.
\newblock {\em Proceedings of the National Academy of Sciences},
  113(10):2672--2677, 2016.

\bibitem[LZCY14]{LI2014}
Guohui Li, Zhe Zhu, Zhang Cong, and Fumin Yang.
\newblock Efficient decomposition of strongly connected components on {GPU}s.
\newblock {\em Journal of Systems Architecture}, 60(1):1 -- 10, 2014.

\bibitem[Mat20]{Matouk20}
A.E. Matouk.
\newblock Complex dynamics in susceptible-infected models for covid-19 with
  multi-drug resistance.
\newblock {\em Chaos, Solitons \& Fractals}, 140:110257, 2020.

\bibitem[MJB{\etalchar{+}}13]{mbodj2013logical}
Abibatou Mbodj, Guillaume Junion, Christine Brun, Eileen~EM Furlong, and Denis
  Thieffry.
\newblock Logical modelling of drosophila signalling pathways.
\newblock {\em Molecular BioSystems}, 9(9):2248--2258, 2013.

\bibitem[MPQY19]{MizeraIEEE}
A.~{Mizera}, J.~{Pang}, H.~{Qu}, and Q.~{Yuan}.
\newblock Taming asynchrony for attractor detection in large boolean networks.
\newblock {\em IEEE/ACM Transactions on Computational Biology and
  Bioinformatics}, 16(1):31--42, 2019.

\bibitem[MX06]{mendoza2006method}
Luis Mendoza and Ioannis Xenarios.
\newblock A method for the generation of standardized qualitative dynamical
  systems of regulatory networks.
\newblock {\em Theoretical Biology and Medical Modelling}, 3(1):13, 2006.

\bibitem[OLB{\etalchar{+}}08]{orlando2008global}
David~A Orlando, Charles~Y Lin, Allister Bernard, Jean~Y Wang, Joshua~ES
  Socolar, Edwin~S Iversen, Alexander~J Hartemink, and Steven~B Haase.
\newblock Global control of cell-cycle transcription by coupled {CDK} and
  network oscillators.
\newblock {\em Nature}, 453(7197):944--947, 2008.

\bibitem[Orz05]{Orzan}
Simona Orzan.
\newblock {\em On Distributed Verification and Verified Distribution}.
\newblock PhD thesis, Free University Amsterdam, 2005.

\bibitem[PIS{\etalchar{+}}21]{long-lived-transients}
Tatjana Petrov, Claudia Igler, Ali Sezgin, Thomas~A Henzinger, and Calin~C
  Guet.
\newblock Long lived transients in gene regulation.
\newblock {\em Theoretical Computer Science}, 893:1--16, 2021.

\bibitem[RCB05]{Bernot05}
Adrien Richard, Jean-Paul Comet, and Gilles Bernot.
\newblock Graph-based modeling of biological regulatory networks: Introduction
  of singular states.
\newblock In {\em Computational Methods in Systems Biology (CMSB 2005)}, volume
  3082 of {\em Lecture Notes in Computer Science}, pages 58--72. Springer,
  2005.

\bibitem[Rei85]{REIF}
John~H. Reif.
\newblock Depth-first search is inherently sequential.
\newblock {\em Information Processing Letters}, 20(5):229 -- 234, 1985.

\bibitem[Saa92]{Saad92}
R.~Saad.
\newblock {\em Sur quelques probl\`emes de complexit\'e dans les graphes.}
\newblock PhD thesis, U. de Paris-Sud, Orsay, 1992.

\bibitem[Sha81]{SHARIR1981}
M.~Sharir.
\newblock A strong-connectivity algorithm and its applications in data flow
  analysis.
\newblock {\em Computers \& Mathematics with Applications}, 7(1):67--72, 1981.

\bibitem[SKI{\etalchar{+}}20]{SCHWAB}
Julian~D. Schwab, Silke~D. Kühlwein, Nensi Ikonomi, Michael Kühl, and Hans~A.
  Kestler.
\newblock Concepts in boolean network modeling: What do they all mean?
\newblock {\em Computational and Structural Biotechnology Journal},
  18:571--582, 2020.

\bibitem[SOHMMA17]{sanchez2017modeling}
Ismael S{\'a}nchez-Osorio, Carlos~A Hern{\'a}ndez-Mart{\'\i}nez, and Agustino
  Mart{\'\i}nez-Antonio.
\newblock Modeling asymmetric cell division in caulobacter crescentus using a
  boolean logic approach.
\newblock In {\em Asymmetric Cell Division in Development, Differentiation and
  Cancer}, pages 1--21. Springer, 2017.

\bibitem[SRM14]{Slota14}
G.~M. {Slota}, S.~{Rajamanickam}, and K.~{Madduri}.
\newblock {BFS} and coloring-based parallel algorithms for strongly connected
  components and related problems.
\newblock In {\em 2014 IEEE 28th International Parallel and Distributed
  Processing Symposium}, pages 550--559, 2014.

\bibitem[SRR{\etalchar{+}}18]{Steffen18}
Will Steffen, Johan Rockstr{\"o}m, Katherine Richardson, Timothy~M. Lenton,
  Carl Folke, Diana Liverman, Colin~P. Summerhayes, Anthony~D. Barnosky,
  Sarah~E. Cornell, Michel Crucifix, Jonathan~F. Donges, Ingo Fetzer, Steven~J.
  Lade, Marten Scheffer, Ricarda Winkelmann, and Hans~Joachim Schellnhuber.
\newblock Trajectories of the earth system in the anthropocene.
\newblock {\em Proceedings of the National Academy of Sciences},
  115(33):8252--8259, 2018.

\bibitem[Tar72]{Tarjan}
Robert~Endre Tarjan.
\newblock Depth-first search and linear graph algorithms.
\newblock {\em {SIAM} J. Comput.}, 1(2):146--160, 1972.

\bibitem[Tho73]{THOMAS}
René Thomas.
\newblock Boolean formalization of genetic control circuits.
\newblock {\em Journal of Theoretical Biology}, 42(3):563--585, 1973.

\bibitem[TW07]{Thomason}
Andrew Thomason and Peter Wagner.
\newblock Complete graphs with no rainbow path.
\newblock {\em Journal of Graph Theory}, 54(3):261--266, 2007.

\bibitem[WKB14]{Dragan14}
Anton Wijs, Joost-Pieter Katoen, and Dragan Bo{\v{s}}na{\v{c}}ki.
\newblock {GPU}-based graph decomposition into strongly connected and maximal
  end components.
\newblock In {\em Computer Aided Verification (CAV 2014)}, volume 8559 of {\em
  Lecture Notes in Computer Science}, pages 310--326. Springer, 2014.

\bibitem[XB00]{xie2000}
Aiguo Xie and Peter~A Beerel.
\newblock Implicit enumeration of strongly connected components and an
  application to formal verification.
\newblock {\em IEEE Transactions on Computer-Aided Design of Integrated
  Circuits and Systems}, 19(10):1225--1230, 2000.

\bibitem[YMPQ19]{YUAN2019}
Qixia Yuan, Andrzej Mizera, Jun Pang, and Hongyang Qu.
\newblock A new decomposition-based method for detecting attractors in
  synchronous boolean networks.
\newblock {\em Science of Computer Programming}, 180:18--35, 2019.

\bibitem[ZC11]{Ciardo11}
Yang Zhao and Gianfranco Ciardo.
\newblock Symbolic computation of strongly connected components and fair cycles
  using saturation.
\newblock {\em Innov. Syst. Softw. Eng.}, 7(2):141--150, 2011.

\bibitem[ZHA{\etalchar{+}}07]{zhang-small-attractors}
Shu-Qin Zhang, Morihiro Hayashida, Tatsuya Akutsu, Wai-Ki Ching, and Michael~K.
  Ng.
\newblock Algorithms for finding small attractors in {B}oolean networks.
\newblock {\em EURASIP J. Bioinformatics Syst. Biol.}, 2007:4--4, January 2007.

\end{thebibliography}

\bibliographystyle{alphaurl}

\end{document}